\definecolor{red}{rgb}{0.7,0.15,0.15}
\definecolor{green}{rgb}{0,0.5,0}
\definecolor{blue}{rgb}{0,0,0.7}
\makeatletter \@addtoreset{equation}{section}
\newtheorem{theorem}{Theorem}[section]
\newtheorem{proposition}[theorem]{Proposition}
\newtheorem{definition}[theorem]{Definition}
\newtheorem{remark}[theorem]{Remark}
\def\no{\noindent}
\def\beq{\begin{eqnarray}}
\def\eeq{\end{eqnarray}}
\def\be*{\begin{eqnarray*}}
\def\ee*{\end{eqnarray*}}
\def \E{\mathbb{E}}
\def \F{\mathbb{F}}
\def \L{\mathbb{L}}
\def \M{\mathbb{M}}
\def \N{\mathbb{N}}
\def \P{\mathbb{P}}
\def \R{\mathbb{R}}
\def \S{\mathbb{S}}
\def \X{\mathbb{X}}
\def \Pr{\mathrm{P}}
\def\Ac{{\cal A}}
\def\Bc{{\cal B}}
\def\Cc{{\cal C}}
\def\Fc{{\cal F}}
\def\Gc{{\cal G}}
\def\Jc{{\cal J}}
\def\Lc{{\cal L}}
\def\Mc{{\cal M}}
\def\Pc{{\cal P}}
\def\Rc{{\cal R}}
\def\Vc{{\cal V}}
\def\Wc{{\cal W}}
\def\Xb{{\bar X}}
\def\x{\times}
\def\eps{\varepsilon}
\def\Om{\Omega}
\def\0{\mathbf{0}}
\def \mub{\overline{\mu}}
\def \muh{\widehat{\mu}}
\def\normeL2#1{\left\|{#1}\right\|_{L^2}}
\def\alphah{\widehat \alpha}
\def\Xh{\widehat X}
\def\Wh{\widehat W}
\def\Drm{{\rm D}}
\def\Dbf{\mathbf{D}}
\def\Gbf{\mathbf{G}}
\def \Lim{\displaystyle\lim}
\DeclareMathOperator*{\esup}{ess\,sup}
\def \alphab {\boldsymbol{\alpha}}
\def \betab {\boldsymbol{\beta}}
\def \Xb{\overline{X}}
\def \Xbb{\mathbf{X}}
\def \Ybb{\mathbf{Y}}
\def \Qr{\mathrm{Q}}
\def \1{\mathds{1}}
\def \d{{\rm d}}
\def \ep{\hbox{ }\hfill$\Box$}
 \title{\bf On Approximate Nash Equilibria  \\ in Mean Field Games}
\author{
Mao Fabrice Djete\footnote{\'Ecole Polytechnique Paris, Centre de Math\'ematiques Appliqu\'ees. This work benefits from the financial support of the Chairs {\it Financial Risk} and {\it Finance and Sustainable Development. Email: mao-fabrice.djete@polytechnique.edu}} 
\and Nizar Touzi\thanks{NYU, Tandon School of Engineering, USA. This author is partially supported by NSF grant \#DMS-2508581. Email: nizar.touzi@nyu.edu} 
}
             \date{\today}
\begin{document}

\maketitle
 
\begin{abstract}
In the context of large population symmetric games, approximate Nash equilibria are introduced through equilibrium solutions of the corresponding mean field game in the sense that the individual gain from optimal unilateral deviation under such strategies converges to zero in the large  population size asymptotic. We show that these strategies satisfy an $\L^\infty$ notion of approximate Nash equilibrium which guarantees that the individual gain from optimal unilateral deviation is small uniformly among players and uniformly on their initial characteristics. We establish these results in the context of static models and in the dynamic continuous time setting, and we cover situations where the agents' criteria depend on the conditional law of the controlled state process.  
\end{abstract}

\vspace{3mm}
\no{\bf Keywords.} Approximate Nash equilibrium, mean field McKean-Vlasov stochastic differential equation, mean field game. 

\vspace{3mm}
\no{\bf MSC2010.} 60K35, 60H30, 91A13, 91A23, 91B30.

\section{Introduction}\label{sec:intro}

Games with a large number of strategic agents arise naturally in economics, engineering, and social sciences, including models of financial engineering, congestion, auctions, communication networks, and population dynamics. While Nash equilibrium provides a canonical solution concept, its computation and characterization quickly become intractable as the number of players grows. Mean field game (MFG) theory addresses this challenge by studying the limiting behavior of symmetric finite player games as the population size $n$ increases to infinity, replacing direct strategic interaction with interaction through an aggregate distribution, or mean field. Since the seminal work of  \citeauthor*{lasry2006jeux} \cite{lasry2006jeux,lasry2006jeux2,lasry2007mean} and \citeauthor*{huang2006large} \cite{huang2006large}, MFG theory has become a central tool for analyzing large-population strategic systems. In the mean field framework, each agent solves an individual optimization problem against an exogenously given flow of population states or actions, while equilibrium requires consistency between the optimal policy and the induced population distribution. The resulting solution concept, called mean field equilibrium (MFE), is substantially simpler than Nash equilibrium of the finite-player game. A central theoretical question is therefore whether, and in what sense, an MFE provides a good approximation to a Nash equilibrium when the number of players is large but finite.

A large body of work establishes that strategies derived from an MFE yield an approximate Nash equilibrium -~also called $\varepsilon-$Nash equilibrium~- in the sense that, in the corresponding $n$--population size game, the individual gain from optimal unilateral deviation under such MFE based strategies vanishes as the size of the population $n$ converges to infinity. In continuous-time stochastic differential games, this program was first carried out for general nonlinear large-population stochastic dynamic games by \citeauthor*{huang2006large} \cite{huang2006large} who showed that this gain from optimal unilateral deviation is of order $n^{-\frac12}$. We also refer to \citeauthor*{cardaliaguet2010notes} \cite{cardaliaguet2010notes} for the first order context. \citeauthor*{carmona2013probabilistic} \cite{carmona2013probabilistic} obtained this result by following the Pontryagin maximum principle approach to the representative agent control problem. 
These results were specialized and sharpened for cost-coupled Linear-Quadratic-Gaussian models in \citeauthor*{Bensoussan2014LinearQuadraticMF} \cite{Bensoussan2014LinearQuadraticMF} and \cite{huang2007large}. The corresponding result in the context of closed loop strategies was obtained by \citeauthor*{cardaliaguet2015master} \cite{cardaliaguet2015master} by using PDE techniques at the level of the master equation on the torus. Subsequent extensions cover more general dynamics and cost structures; see, among others, \citeauthor*{carmona2018probabilisticI}  \cite{carmona2018probabilisticI}. These results rely on propagation-of-chaos arguments and stability estimates showing that the impact of a single agent's deviation on the empirical distribution vanishes asymptotically.

The notion of an $\varepsilon-$Nash equilibrium in static games was introduced explicitly by \citeauthor*{RADNER1980136} \cite{RADNER1980136} who defined $\varepsilon-$equilibria as strategy profiles in which no player can gain more than $\varepsilon$ by a unilateral deviation. Radner's motivation was to model approximate rationality and robustness in large economic systems, particularly oligopolistic markets, where exact best responses may be unrealistic or analytically fragile. This concept provided a formal relaxation of Nash equilibrium while retaining clear strategic meaning. Subsequent work in static game theory adopted $\varepsilon-$Nash equilibria as a tool for studying stability and approximation of equilibria, especially in large games where individual players have limited influence. In parallel, the literature on large games and continuum limits - building on earlier nonatomic game theory of e.g., \citeauthor*{Schmeidler1973EquilibriumPO}  \cite{Schmeidler1973EquilibriumPO} - recognized that equilibria of nonatomic games typically correspond, when interpreted in finite-player settings, to approximate equilibria rather than exact Nash equilibria, even though $\varepsilon-$Nash equilibria were not always introduced explicitly in that early work. Over time, $\varepsilon-$Nash equilibria became the standard language for formalizing this approximation viewpoint. More recently, $\varepsilon-$Nash equilibria have also played a central role in algorithmic game theory, where exact Nash equilibria may be computationally intractable; see, for instance, \citeauthor*{Lipton2003PlayingLG} \cite{Lipton2003PlayingLG} on approximate Nash equilibria in large normal-form games, and \citeauthor*{Daskalakis2006TheCO}  \cite{Daskalakis2006TheCO} on the complexity of computing approximate equilibria. Collectively, this literature establishes $\varepsilon-$Nash equilibria as a fundamental solution concept for static games, serving both as a robustness notion in economic modeling and as a practical approximation tool in large and complex strategic environments.

A common feature of nearly all these approximation results is that convergence of the gain from optimal deviating is established in an averaged sense, typically in expectation or in probability. More precisely, the standard definition of approximate Nash equilibrium bounds the expected gain from unilateral deviation, leading naturally to $\L^1-$type estimates with respect to the underlying probability measure governing the state dynamics. In continuous-time stochastic models, error bounds are usually derived for expected costs conditional on the common filtration, see e.g. \citeauthor*{carmona2018probabilisticI}  \cite{carmona2018probabilisticI},  \citeauthor*{huang2006large} \cite{huang2006large}, while in discrete-time models the approximation holds in expectation over the empirical distribution of states or actions, see e.g. \citeauthor*{ADLAKHA2015269}  \cite{ADLAKHA2015269}. Although such results are sufficient for many purposes, they do not provide uniform or worst-case guarantees on the approximation quality, and possibly allow for rare but potentially large deviations of realized payoffs from their mean field limits.

This paper contributes to the literature by strengthening the mode of convergence of the gain from optimally deviating. Our focus is on $\L^\infty-$convergence results of the Nash equilibrium error, yielding uniform bounds over the relevant state space. Of course, when the underlying state space is compact, this does not differ from the previously established results. Our stronger notion of approximation provides robust guarantees that the MFE-based strategies provide equal benefit for all agents regardless of their individual characteristics and uniformly across realizations and initial conditions. To the best of our knowledge, such $\L^\infty$ bounds are largely absent from the existing MFG approximation literature, both in continuous and discrete time, and require new analytical techniques beyond standard propagation-of-chaos arguments. This refinement strengthens the economic and algorithmic interpretation of MFE and broadens its applicability in settings where worst-case performance guarantees are essential.

Throughout this paper, we consider the general situation where the individual player's criterion is a function of the conditional law of its state, conditional on its own initial characteristics. We first derive our main approximate Nash equilibrium results in discrete time -~actually in a one-period model~- which raises some special difficulties due to some features related to the discrete time setting. We next cover the continuous time setting in the simple uncontrolled diffusion setting, and open-loop finite population game. Due to the nature of the agents' performance criterion, we also complement our analysis with the corresponding existence results of MFE. 

Our main findings are established under several refinements of the standard Lipschitz condition on the coefficients driving the dynamics of the game problem:
\begin{itemize}
\item We derive an $\L^\infty-$approximate Nash result under the strong $\Wc_0-$Lipschitz condition; this is our first $\varepsilon-$Nash equilibrium result in the $\L^\infty-$sense. 
\item Next, under the weaker, and more standard, $\Wc_p-$Lipschitz condition, we derive a new $\L^\infty-$approximate Nash result for a large portion of the population which increases as an appropriate power $n^\delta$ of the population size. For this reason, we call $(\varepsilon,\delta)-$Nash equilibrium this new notion of approximate Nash-equilibria.
\end{itemize}
The paper is organized as follows. Section \ref{sec:pbfm} introduces the static symmetric finite population game problem and the corresponding mean field limit, and reports our main approximate Nash equilibrium results. We next derive in Section \ref{sec:continuous} similar results in the uncontrolled diffusion continuous time setting. We finally report the existence results of mean field equilibrium in the static and continuous settings in Sections \ref{sect:MFGStatic} and \ref{sect:MFGContinuous}, respectively.

\vspace{3mm}
\noindent {\bf Notations}\quad 
For a generic Polish space $E$, we denote $\Bc_E$ the collection of all Borel subsets of $E$, and by $\Pc(E)$ the collection of all probability measures $m:A\in\Bc_E\longmapsto m[A]\in[0,1]$. We denote similarly by $m(f) := \int_{E}f\d m$, for any $m$--integrable function $f:E\longrightarrow\R$. 

For $q\in \{0\} \cup [1,\infty)$, the subset $\Pc_{q}(E)\subset\Pc(E)$ consists of all measures $m$ with finite $q$-th moment $\int_E|x|^qm(\d x)<\infty$, and we denote by $\Wc_q$ the corresponding $q-$Wasserstein distance:
\[
\Wc_q(\gamma,\gamma')^{1 \vee q} 
:=
\inf_{\pi \in \Pi(\gamma, \gamma')}  
\int_{E \x E} c_q(e-e') ~\pi( \mathrm{d}e, \mathrm{d}e'),
~c_q(x):=|x|^q\1_{\{ q \ge 1\}}+1\wedge |x|\1_{\{q=0\}},
\]
with $\Pi(\gamma, \gamma')$ the collection of all coupling probability measures with marginals $\gamma$ and $\gamma'$. Finally, $C(E)$ is the space of continuous real-valued functions on $E$, and $C_b(E)$ denotes the corresponding subset of bounded functions. 

\section{One-period approximate Nash equilibria}\label{sec:pbfm}

Let $(\Omega,\Fc,\P)$ be  probability space.
Throughout this section, $\S$ is a Polish space that will stand as the state space. We denote 
\begin{eqnarray*}
\bar\S:=\S\times[0,1]
&\mbox{and}&
\X:=\S\times\S
\end{eqnarray*}
 the randomized state space and the state space a 2-step dynamic process $X=(X_0,X_1)$ on $(\Omega,\Fc)$, respectively. Here, the $[0,1]-$component of $\bar\S$ will serve as the support of an independent uniform distribution ${\rm U}_{[0,1]}$. We shall consider control strategies taking values in a Polish space $A$  with bounded distance $\rho_{_{\!\!A}}$.

\subsection{Finite population Nash equilibrium}

We consider a finite population of $n$ agents whose actions can be chosen from the set $\Ac_n:=\L^0(\bar\S^n,A)$ of all Borel measurable maps from $\bar\S^n$ to a Polish space $A$ . We denote by $\boldsymbol{\alpha}:=(\alpha^1,\cdots,\alpha^n)\in(\Ac_n)^n$ the vector of actions of the $n$ agents. As standard in game theory, we use the standard notation  $\boldsymbol{\alpha}^{-i}\in(\Ac_n)^{n-1}$ for the vector of $n-1$ actions excluding that of the the $i-$th agent, and $\boldsymbol{\alpha}^{-i}\!\oplus\!\beta$ for the vector of $n$ actions where $\alpha^i$ is replaced by $\beta\in\Ac_n$.

\medskip
Let $\mathbf{X}_0^n:=(X^i_0)_{i\le n}$ and $\boldsymbol{\alpha}:=(\alpha^1,\cdots,\alpha^n) \in (\Ac_n)^n$ be a family of r.v. representing the initial states and the actions of a finite population of $n$ agents. Let $(\xi,\eta)$ be a pair of independent random variables with $\Lc_\xi={\rm U}_{[0,1]}$, and consider a family $(\eta^i, \xi^i)_{i \ge 1}$ of independent copies of $(\eta,\xi)$ and also independent of $(X^i_0)_{i \ge 1}$. We shall consider in \eqref{eq:X} below the evolution of the underlying state to $(X^i_1)_{i\le n}$ by means of some strategy $\boldsymbol{\alpha}$, and we denote $X^i=(X^i_0,X^i_1)$ the resulting state process. We introduce additional notations for the randomized initial states and the pairs state-control:
$$
\bar X^i_0:=(X^i_0,\xi^i)\in\bar\S
~~\mbox{and}~~
\Theta^i:=(X^i,\alpha^i(\boldsymbol{\bar X}^n_0))\in\X\times A,~~\mbox{a.s.}
$$
where $\boldsymbol{\bar X}^n_0:=(\bar X^1_0,\ldots,\bar X^n_0)$, $\boldsymbol{\Theta}^n:=(\Theta^1,\ldots,\Theta^n)$.

\medskip
The population state process $\Xbb^{n,\boldsymbol{\alpha}}=(\Xbb^n_0,\Xbb^{n,\boldsymbol{\alpha}}_1):=(X^{n,\boldsymbol{\alpha},1},\cdots,X^{n,\boldsymbol{\alpha},n})$ is defined by the state equation for each $X^{n,\boldsymbol{\alpha},i}$, $i\!=\!1,\ldots,n$:
\begin{eqnarray}\label{eq:X}
    X^{n,\boldsymbol{\alpha},i}_1
    =
    F^{\eta^i }_{X^i_0} \big(\mu^n_{\boldsymbol{\Theta}^n}, 
                                                       \alpha^{i}(\boldsymbol{\bar X}^n_0)
                                 \big), 
    &\mbox{where}&
    \mu^n_{\boldsymbol{\Theta}^n}
    \!:=\!
    \frac{1}{n} \sum_{i=1}^n \delta_{\Theta^{n,i}},
\end{eqnarray}
and $F: \R\x\S\x \Pc(\X \x A) \x A  \longrightarrow \S$ is a Borel map $(e,x_0,m,a)\longmapsto F^e_{x_0}(m,a)$ satisfying appropriate conditions, see Assumption ${\rm H}_p$ below. 

The dynamics \eqref{eq:X} couples the final value of the state $\mathbf{X}$ through the dependence of $F$ on the environment, represented by the empirical measure $\mu^n_{\boldsymbol{\Theta}^n}$.

\begin{remark}{\rm
Notice that $\Xbb^{n,\alphab}_1$ appears on both sides of \eqref{eq:X} as it is contained in the right hand side through the empirical measure $\mu^n_{\boldsymbol{\Theta}^n}$. In the present one-period setting this is the only possible source of interaction of the particles dynamics. For this reason, $\Xbb^{n,\alphab}$ does not fit into the standard class of Markov decision process in the literature, and we shall introduce an appropriate assumption to ensure the well--posedness of $\Xbb^{n,\alphab}_1$ through \eqref{eq:X}. While this feature needs to be addressed explicitly in discrete time, we observe that it raises no issues in the continuous time formulation of the problem, due to the infinitesimal time step.
} 
\end{remark}

\medskip
The performance criterion of each agent $i$ is defined as a functional of the conditional law $\Lc_{X^{n,\boldsymbol{\alpha},i}_1}^{|X^i_0}:=\P\circ(X^{n,\boldsymbol{\alpha},i}_1|X^i_0)^{-1}$: 
\begin{align*}
    J_i^n(X^i_0,\boldsymbol{\alpha})
    := 
    U \big( \Lc_{X^{n,\boldsymbol{\alpha},i}_1}^{|X^i_0} \big)
    ~~\mbox{and}~~
    V_i^n(X^i_0,\boldsymbol{\alpha}^{-i})
    :=
    \esup_{\alpha^i\in\Ac_n}J_i^n(X^i_0,\boldsymbol{\alpha}),
\end{align*}
where $U:\Pc(\S) \longrightarrow \R \cup \{ - \infty\}$ is a given Borel map satisfying appropriate continuity conditions, see Assumption ${\rm H}_p$ below. 

For any strategy $\boldsymbol{\alpha}\in(\Ac_n)^n$, we introduce the agents deviation gains:
\begin{eqnarray}\label{Dn}
\Drm_n^i[\boldsymbol{\alpha}]
:=
V_i^n(X^i_0,\boldsymbol{\alpha}^{-i})-J_i^n(X^i_0,\boldsymbol{\alpha}),
~i=1\ldots,n,
\end{eqnarray}
representing the maximum gain that Agent $i$ would realize by deviating from the strategy $\alpha^i$ given the environment strategy $\boldsymbol{\alpha}^{-i}$. Clearly, $\Drm_n^i[\boldsymbol{\alpha}]$ is non-negative, and we introduce
\begin{eqnarray}\label{Dn}
\overline{\Dbf}^\delta_n[\boldsymbol{\alpha}]
:=
\max_{i\le n}\Dbf_n^i [\boldsymbol{\alpha}]\1_{\{X^i_0\in B(n^\delta)\}},
~~\delta\in(0,\infty].
\end{eqnarray}
where $B(R)$ denotes the centered ball in $\R^d$ with radius $R>0$. Here $\overline{\Dbf}^\infty_n$ is the population maximum deviation gain and for $\delta\in(0,\infty)$, the truncated population maximum deviation $\overline{\Dbf}^\delta_n$ ignores the portion of the population outside an expanding ball of diverging radius. The following definition recalls the standard notion of Nash equilibrium and $\varepsilon-$Nash equilibrium in the context of a finite population game, and introduces additional notions of approximate Nash equilibrium that will be the main focus of the present paper.

\begin{definition}\label{def:Nash}
For fixed $n\ge 1$ and $\varepsilon,\delta>0$, a strategy $\boldsymbol{\alpha}\in(\Ac_n)^n$ is called
\\
{\rm (NE)} a Nash equilibrium for $n-$player game if $\overline{\Dbf}^\infty_n[\boldsymbol{\alpha}]=0$;
\\
{\rm (NE$_\varepsilon$)} an $\varepsilon-$Nash equilibrium for $n-$player game if $\overline{\Dbf}^\infty_n[\boldsymbol{\alpha}]\le\varepsilon$.
\\
{\rm (NE$^\delta_\varepsilon$)} an $(\varepsilon,\delta)-$Nash equilibrium for $n-$player game if $\overline{\Dbf}^\delta_n[\boldsymbol{\alpha}]\le\varepsilon$.
\end{definition}

\begin{remark}\label{rem:approx_Nash_uniform}
{\rm Definition~\ref{def:Nash} introduces a hierarchy of equilibrium notions tailored to different levels of approximation in large population games. By construction, a Nash equilibrium corresponds to the strongest requirement, enforcing the absence of profitable unilateral deviations for every agent in the population. The notion of $\varepsilon$--Nash equilibrium relaxes this condition by allowing deviation gains of order at most $\varepsilon$, uniformly over all agents $i\le n$.

The $(\varepsilon,\delta)$--Nash equilibrium further weakens the equilibrium requirement by restricting the uniform control of deviation gains to agents whose initial states belong to the expanding ball $B(n^\delta)$. Importantly, this control remains uniform with respect to the agent index \(i\); the relaxation concerns only the region of the state space on which the equilibrium condition is enforced, and not the uniformity of the deviation bound across the relevant subpopulation.

This notion is particularly well suited to asymptotic regimes in which the initial states $\{X^i_0\}_{i\le n}$ may have unbounded support. In such settings, uniform estimates can typically be established on compact subsets of the state space, while the contribution of agents outside $B(n^\delta)$ is handled via tail estimates or moment conditions. As $n\to\infty$, choosing $\delta$ sufficiently large allows the truncated region to capture an overwhelming fraction of the support of the initial distribution, so that $(\varepsilon,\delta)$--Nash equilibria provide a meaningful approximation of full $\varepsilon$--Nash equilibria in the large population limit. 

Finally, this notion is well adapted to obtaining $\L^\infty$--type approximations of Nash equilibria, in contrast with the $\L^1$ or averaged notions more commonly considered in the literature; see {\rm Theorem~\ref{thm:approx_nash_normal}} below for a precise illustration of this feature.
}\end{remark}
 
\subsection{The mean field game}

Our main focus in this paper is to derive approximate Nash equilibria for large population games, i.e. in the infinite population asymptotics $n\to\infty$. We follow the standard method in the literature on large population symmetric games which defines such approximate Nash equilibria $\boldsymbol{\alpha}^n$ through a solution of the limiting mean field problem that we now introduce:
\begin{itemize}
\item Let $\alpha\in\Ac:=\Ac_1=\L^0(\bar\S,A)$, the set of Borel measurable maps from $\bar\S$ to $A$;
\item Let $X_0$, $\xi$ and $\eta$ be independent r.v. with $\Lc_{X_0}=\mu_0$, $\Lc_\xi={\rm U}_{[0,1]}$, and denote the randomized initial state $\bar X_0:=(X_0,\xi)$;
\item Given $\mu\in \Pc(\X \x A)$ with first marginal $\mu_0=\mu\circ X_0^{-1}$, we introduce the $\X-$valued process $X^{\mu,\alpha}=(X_0,X^{\mu,\alpha}_1)$ by the transition dynamics:
\begin{align*}
    X^{\mu,\alpha}_1= F_{X_0}^\eta\big(\mu, \alpha(\bar X_0) \big).
\end{align*}
\end{itemize}
Using the Borel map $U:\Pc(\S) \longrightarrow \R \cup \{- \infty \}$ as in the finite population setting, we consider the representative agent's performance map
\begin{align*}
    J(\mu,\alpha) := \E \big[ U \big( \Lc_{X^{\mu,\alpha}_1}^{| X_0} \big)\big],
    ~~\mbox{for all}~~
    \mu\in\Pc(\X\times A),~\mbox{and}~\alpha\in \Ac,
\end{align*}
where $\Lc_{X^{\mu,\alpha}_1}^{| X_0}:=\P\circ(X^{\mu,\alpha}_1| X_0)^{-1}$, and the expectation is well defined by the convention $\infty-\infty=-\infty$.

\begin{definition}
A measure $\hat\mu \in \Pc(\X \x A)$ is an MFG solution associated to $\hat{\alpha} \in \Ac$ if: 
    \begin{enumerate}
        \item[{\rm (IO)}] given $\hat\mu$, the strategy $\hat{\alpha}$ solves the individual optimality $J(\hat\mu,\hat{\alpha})=\max_{\alpha \in \Ac} J(\hat\mu,\alpha)$;
        \item[{\rm (FP)}] and $\hat\mu$ satisfies the fixed point condition $\P\circ(X^{\hat\mu,\hat{\alpha}},\, \alphah (\bar X_0))^{-1}=\hat\mu$.
    \end{enumerate}
\end{definition}

Since the control $\alpha$ depends on $X_0$, notice that Condition (IO) in the last definition of the MFG means that 
$$
U \big( \Lc_{X^{\hat\mu,\hat{\alpha}(\bar X_0)}_1}^{| X_0}\big)
=
V(X_0,\hat\mu),
~
\mbox{with}~
V(X_0,\mu)
:=
\esup_{a\in \L^0(A)}
U \big( \Lc_{X^{\mu,a}_1}^{| X_0}\big)
,~\mu_0-\mbox{a.s.} 
$$
which expresses the standard individual optimization in the present mean field notion of Nash equilibrium.

Throughout this section, we impose appropriate conditions on the nonlinearity $F$ involving a parameter $p\ge 0$ which will be specified in our subsequent results.

\newtheorem*{assA}{Assumption ${\mathbf H}_p$}
\begin{assA} The restrictions to $\Pc_p$ of the maps $U$ and $F$ satisfy the following.

\medskip
\noindent {\rm (i)} The restriction of the map  $U$ to $\Pc_p$ is continuous;

\medskip
\noindent {\rm (ii)} The map $(x_0,\mu,a,e)\longmapsto F_{x_0}^e(\mu,a)$ is continuous in $(\mu,a)\in\Pc_p(\X \x A)\times A$ for all $(x_0,e)\in\S\x\R$, and has affine growth in $(e,x_0)$ uniformly in $(\mu,a)$.
\end{assA}

Our main result -~reported in the  next subsection~- builds on some given solution of the mean field game in order to construct approximate Nash equilibria. For completeness, we provide two sets of sufficient conditions which guarantee the existence of a solution to the mean field game. 

\medskip
For any $m \in \Pc(\X \x A)$, we denote by $m^X(\mathrm{d}x):=m(\mathrm{d}x,A)$ the marginal of $m$ in $\X$. 

\begin{proposition} \label{prop_continuous}
Let Assumption ${\rm H}_p$ hold for some $p >1$, and let $\eta\in\L^q$ and $\mu_0\in\Pc_q(\S)$ for some $q>p$. Then there exists a solution of the MFG $\muh \in \Pc_p (\X \x A)$ under either one of the following conditions:

\vspace{1mm}
\noindent {\rm (MFG1)} There exists an optimal response map $\widehat{a}:\Pc_p(\X \x A)\longrightarrow \Ac$ satisfying 
	\begin{itemize}
	\item[{\rm (i)}] $J(\mu,\widehat{a}(\mu))=\max_{\alpha\in\Ac} J(\mu,\alpha)$ for all $\mu\in\Pc_p(\S \x A)$,
	\item[{\rm (ii)}] $\mu\longmapsto\widehat{a}(\mu,\bar X_0)$ is a continuous map from $\Pc_p(\X \x A)\longrightarrow A$, $\mu_0 \otimes {\rm U}_{[0,1]}-$a.s.
	\end{itemize}

\vspace{1mm}
\noindent {\rm (MFG2)} Or, $F^e_{x_0}(\mu,a)=\widehat{F}^e_{x_0}\left(\mu^X, \langle f^1, \mu \rangle,\cdots, \langle f^L, \mu \rangle,\,a \right)$  for some Borel maps $\widehat F$ and $(f^\ell)_{1 \le \ell \le L}$, $A$ is convex and the level set 
\begin{equation}\label{cond:conv}
\begin{array}{c}
\big\{ \big(\Lc_{F^\eta_{x_0}\!(\mu,a)},\Fc(x_0,\mu,a),u\big)\!\in\!\Pc(\S)\!\times\!\R^{L+1}\!:
            ~a \!\in\! \Ac~\mbox{independent of $\eta$, and}~u \!\le\! U(\Lc_{F^\eta_{x_0}\!(\mu,a)}) 
\big\}
\\ \\
\mbox{is convex for all}~(x_0,\mu)\in\S\times\Pc(\X\times A).
\end{array}
\end{equation}
where $\Fc(x_0,\mu,a):=( \int_{[0,1]} f^\ell (x_0, F^e_{x_0}\!(\mu,a), a) \,\mathrm{d}e)_{1 \le \ell \le L}$.
\end{proposition}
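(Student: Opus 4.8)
The plan is to realize $\muh$ as a fixed point of a self-map on a convex compact subset of $\Pc_p(\X\times A)$, invoking Schauder's theorem in case (MFG1) and the Kakutani--Fan--Glicksberg theorem in case (MFG2). First I would build the domain. Using the affine growth of $F$ in $(e,x_0)$ uniformly in $(\mu,a)$, together with $\eta\in\L^q$ and $\mu_0\in\Pc_q(\S)$, one obtains a bound $\E\big[|X^{\mu,\alpha}_1|^q\big]\le M$ that is uniform over $\mu\in\Pc_p(\X\times A)$ and $\alpha\in\Ac$. Hence the set $\Kc$ of measures $\mu\in\Pc_p(\X\times A)$ with $\mu\circ X_0^{-1}=\mu_0$ and $q$-th moment of the terminal marginal bounded by $M$ is convex, and -- crucially because $q>p$ -- it is $\Wc_p$-relatively compact: the uniform $q$-th moment control yields both tightness and uniform integrability of the $p$-th moments, so that weak convergence upgrades to $\Wc_p$-convergence on $\Kc$ (tightness of the control marginal being inherited from the structure of $A$). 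Closedness is routine, so $\Kc$ is convex and $\Wc_p$-compact.

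Under (MFG1) I would set $\Phi:\Kc\to\Kc$, $\Phi(\mu):=\P\circ\big(X^{\mu,\ah(\mu)},\ah(\mu)(\bar X_0)\big)^{-1}$; the moment bound gives $\Phi(\Kc)\subset\Kc$. The crux is $\Wc_p$-continuity of $\Phi$: if $\mu_n\to\mu$ in $\Wc_p$, then $\ah(\mu_n)(\bar X_0)\to\ah(\mu)(\bar X_0)$ a.s. by (MFG1)(ii), whence $F^\eta_{X_0}\big(\mu_n,\ah(\mu_n)(\bar X_0)\big)\to F^\eta_{X_0}\big(\mu,\ah(\mu)(\bar X_0)\big)$ a.s. by the joint continuity in Assumption ${\rm H}_p$(ii). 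The uniform $q$-th moment bound then promotes the resulting weak convergence of images to $\Wc_p$-convergence. Schauder's theorem yields $\muh=\Phi(\muh)$, which satisfies (FP) by construction and (IO) since $\ah(\muh)$ maximizes $J(\muh,\cdot)$ by (MFG1)(i).

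Under (MFG2) there is in general no continuous optimal selection, so I would replace $\Phi$ by the best-response correspondence $\mu\mapsto\Psi(\mu)$, the set of laws $\P\circ\big(X^{\mu,\alpha},\alpha(\bar X_0)\big)^{-1}$ generated by maximizers $\alpha$ of $J(\mu,\cdot)$. The finite-dimensional dependence $F^e_{x_0}(\mu,a)=\widehat{F}^e_{x_0}\big(\mu^X,\langle f^1,\mu\rangle,\dots,\langle f^L,\mu\rangle,a\big)$ lets me encode each response through the vector $\big(\Lc_{F^\eta_{x_0}(\mu,a)},\Fc(x_0,\mu,a),u\big)$, and the convexity of the level set in \eqref{cond:conv} is precisely what forces $\Psi(\mu)$ to have convex values while, via a Cesari--Filippov-type closedness argument, also guaranteeing that the individual supremum defining (IO) is attained. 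Upper hemicontinuity (closed graph) of $\Psi$ follows from the continuity of $U$ and $F$ in Assumption ${\rm H}_p$ combined with the $\Wc_p$-compactness of $\Kc$, and the values are nonempty, convex and compact. Kakutani--Fan--Glicksberg then produces $\muh\in\Psi(\muh)$, i.e. a solution satisfying (IO)--(FP).

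I expect the main obstacle to be twofold. First, in both cases, securing genuine $\Wc_p$-compactness of the domain: this is exactly where the strict inequality $q>p$ is indispensable, since a mere $p$-th moment bound would permit escape of mass and would leave $\Phi$ (resp.\ $\Psi$) only weakly continuous -- too weak to exploit the continuity of $U$ and $F$, which is assumed only on $\Pc_p$. Second, in (MFG2), checking that the convexity condition \eqref{cond:conv} simultaneously delivers attainment of the individual optimum and convexity of $\Psi(\mu)$ with a closed graph; this closedness/convexity step is the delicate part, as it must reconcile the a.s.\ convergence coming from Assumption ${\rm H}_p$ with the extraction of an optimal response from a converging sequence of induced distributions.
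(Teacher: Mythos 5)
Your treatment of (MFG1) is essentially the paper's proof: the same fixed-point map $\mu\mapsto\Lc_{(X^{\mu,\hat a},\,\hat a(\mu,\bar X_0))}$ on the fiber $\{\mu:\mu\circ X_0^{-1}=\mu_0\}$, continuity obtained from Assumption ${\rm H}_p$(ii) together with (MFG1)(ii), relative $\Wc_p$-compactness from the uniform $q$-th moment bound with $q>p$ and the boundedness of $\rho_A$, and Schauder's theorem. (The paper deduces compactness of the image of the map rather than working on a compact domain, but this is immaterial.)

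For (MFG2), however, your plan of applying Kakutani--Fan--Glicksberg directly to the strict best-response correspondence $\Psi(\mu):=\{\P\circ(X^{\mu,\alpha},\alpha(\bar X_0))^{-1}:\alpha\ \text{maximizes}\ J(\mu,\cdot)\}$ has genuine gaps, and they are precisely the points you flag as ``delicate'' without resolving them. First, nonemptiness and compactness of $\Psi(\mu)$ require attainment of $\sup_{\alpha\in\Ac}J(\mu,\alpha)$ over strict controls, and the set of joint laws induced by strict controls is in general neither closed nor convex in $\Pc_p(\X\times A)$: a convex combination of two such laws is realized by a \emph{randomized} control, not by an element of $\Ac$. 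Condition \eqref{cond:conv} does not convexify $\Psi(\mu)$ -- it asserts convexity of a level set in $\Pc(\S)\times\R^{L+1}$, and in the paper it is used only at the very end, in a Filippov-type measurable selection that de-randomizes a relaxed control while preserving the conditional law $\Lc^{|X_0}_{X_1}$, the finitely many moments $\langle f^\ell,\cdot\rangle$ (this is where the structural hypothesis $F^e_{x_0}(\mu,a)=\widehat F^e_{x_0}(\mu^X,\langle f^1,\mu\rangle,\dots,\langle f^L,\mu\rangle,a)$ is indispensable), and the reward. Second, the criterion is a function of the conditional law $\Lc^{|X_0}_{X_1}$, which is \emph{not} a continuous functional of the joint law; hence the closed-graph property of $\Psi$ does not follow from continuity of $U$ and $F$ alone. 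The paper circumvents both obstructions by (i) relaxing the controls to $\Pc(A)$-valued randomizations represented through a Blackwell--Dubins kernel $\overline M(m,\xi)$, which restores convexity and compactness of the admissible set, and (ii) lifting the state to the triplet $(X_0,\Lc^{|X_0}_{\overline X_1},a)$ on $\Rc=\S\times\Pc_p(\S)\times A$, so that the conditional law becomes an explicit coordinate and the Berge Maximum Theorem plus Kakutani--Fan--Glicksberg apply to the relaxed argmax correspondence. Without these two relaxation layers, the fixed-point argument you sketch cannot be closed.
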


The proof of this result is deferred to Section \ref{sect:MFGStatic}. We conclude this subsection by providing a sufficient condition for the continuity condition on the optimal response map in Proposition \ref{prop_continuous} (i).

\begin{proposition} \label{prop_CSahat}
In the context of Proposition \ref{prop_continuous} under Condition {\rm (MFG1)}, the continuity of $\mu\longmapsto\widehat{a}(\mu,\bar X_0)$ holds under the following conditions: 
\begin{itemize}
\vspace{-2mm}\item the response map $\widehat{a}$ is unique, and
\vspace{-2mm}\item  for any converging sequence $(\mu_n)_{n\ge 1}$, $\{ \Lc \left(\widehat{a}(\mu_n,x_0, \xi) \right)\}_{n\ge 1}$ is relatively compact in $\Pc_p(A)$ for $\mu_0-$a.e. $x_0\in\S$.
\end{itemize}
\end{proposition}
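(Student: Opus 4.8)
The plan is to reduce the representative agent's optimization to a problem over the \emph{law} of the control, which is precisely the object governed by the relative compactness hypothesis, and then to run a compactness-plus-uniqueness argument. First I would exploit the independence $\eta\perp(X_0,\xi)$ to note that the problem decouples over the initial state $x_0$, and that for fixed $(\mu,x_0)$ the conditional law $\Lc_{X^{\mu,\alpha}_1}^{|X_0=x_0}$ of $F_{x_0}^\eta(\mu,\alpha(x_0,\xi))$ depends on the control only through the law $\nu:=\Lc(\alpha(x_0,\xi))\in\Pc_p(A)$; indeed it equals the mixture $\Psi(\mu,x_0,\nu):=\int_A\Lc\big(F_{x_0}^\eta(\mu,y)\big)\,\nu(\d y)\in\Pc_p(\S)$. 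Writing $\Phi(\mu,x_0,\nu):=U\big(\Psi(\mu,x_0,\nu)\big)$, the inner problem is $V(x_0,\mu)=\sup_{\nu\in\Pc_p(A)}\Phi(\mu,x_0,\nu)$, whose maximizer is exactly the optimal control law $\nu^\star(\mu,x_0)=\Lc(\widehat a(\mu,x_0,\xi))$. Continuity of $\mu\mapsto\widehat a(\mu,\bar X_0)$ is then sought as continuity of $\mu\mapsto\nu^\star(\mu,x_0)$ in $\Pc_p(A)$ for $\mu_0$-a.e. $x_0$.

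Second, I would establish the joint continuity of $(\mu,\nu)\mapsto\Phi(\mu,x_0,\nu)$ on $\Pc_p(\X\x A)\x\Pc_p(A)$. Given $\mu_n\to\mu$ and $\nu_n\to\nu$, a Skorokhod coupling $Y_n\to Y$ with $Y_n\sim\nu_n$ (and $\eta$ independent), together with the continuity of $F$ in $(\mu,a)$ from Assumption ${\rm H}_p$(ii), yields $F_{x_0}^\eta(\mu_n,Y_n)\to F_{x_0}^\eta(\mu,Y)$ a.s.; the affine growth of $F$ in $(e,x_0)$ uniform in $(\mu,a)$, combined with $\eta\in\L^q$ for some $q>p$, provides a fixed $\L^{q/p}$-dominating bound for $|F_{x_0}^\eta(\mu_n,Y_n)|^p$, hence uniform integrability and the convergence $\Psi(\mu_n,x_0,\nu_n)\to\Psi(\mu,x_0,\nu)$ in $\Pc_p(\S)$. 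The continuity of $U$ on $\Pc_p$ from Assumption ${\rm H}_p$(i) then gives $\Phi(\mu_n,x_0,\nu_n)\to\Phi(\mu,x_0,\nu)$ in $\R\cup\{-\infty\}$.

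Third, I would run the compactness-uniqueness argument. Fix $\mu_n\to\mu$ and set $\nu_n:=\nu^\star(\mu_n,x_0)$. By the relative compactness hypothesis, every subsequence admits a further subsequence $\nu_{n_k}\to\nu_\infty$ in $\Pc_p(A)$. For any competitor $\nu'\in\Pc_p(A)$, optimality of $\nu_{n_k}$ for $\mu_{n_k}$ gives $\Phi(\mu_{n_k},x_0,\nu_{n_k})\ge\Phi(\mu_{n_k},x_0,\nu')$; letting $k\to\infty$ and invoking the joint continuity above yields $\Phi(\mu,x_0,\nu_\infty)\ge\Phi(\mu,x_0,\nu')$, so $\nu_\infty$ is optimal for $(\mu,x_0)$. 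Uniqueness of the optimal response then forces $\nu_\infty=\nu^\star(\mu,x_0)$, and since every subsequential limit equals this single value, the whole sequence converges, $\nu^\star(\mu_n,x_0)\to\nu^\star(\mu,x_0)$ in $\Pc_p(A)$. This is the desired continuity in law; the ${\rm U}_{[0,1]}$-a.s. continuity of the function $\widehat a$ itself follows by realizing each law through a fixed canonical transport of ${\rm U}_{[0,1]}$ (e.g. an inverse-CDF-type measurable map), so that $\Pc_p(A)$-convergence of laws upgrades to a.s. convergence of $\widehat a(\mu_n,x_0,\xi)$.

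The delicate step is the joint continuity of $\Phi$, and specifically the upgrade from weak convergence of the output laws to convergence in $\Pc_p(\S)$: this is where the uniform-in-$(\mu,a)$ affine growth of $F$ and the integrability gain $q>p$ of $\eta$ are essential, since $U$ is assumed continuous only on $\Pc_p$ and not for the mere weak topology. A secondary subtlety is that the extracted limit $\nu_\infty$ need not a priori be the law of the particular selection $\widehat a(\mu,\cdot)$; this is exactly what the uniqueness hypothesis resolves, and it is also what makes continuity of the optimal \emph{control law}, rather than of a specific measurable representative, the natural conclusion.
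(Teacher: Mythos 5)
Your proposal is correct and follows essentially the same route as the paper's proof: extract a convergent subsequence of the control laws via the relative compactness hypothesis, pass to the limit in the optimality inequality using the continuity of $F$ and $U$, identify the subsequential limit with $\Lc(\widehat a(\mu,x_0,\xi))$ by uniqueness, and conclude convergence of the full sequence. Your reformulation through the control law $\nu\in\Pc_p(A)$ and the explicit Skorokhod/uniform-integrability argument for the joint continuity of $(\mu,\nu)\mapsto U(\Psi(\mu,x_0,\nu))$ merely make explicit steps the paper leaves implicit (in particular the role of the affine growth of $F$ and $\eta\in\L^q$, $q>p$, in upgrading weak convergence to $\Wc_p$-convergence), so no substantive difference remains.
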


The proof of this result is also reported in Section \ref{sect:MFGStatic}. 

\subsection{Approximate Nash Equilibrium: main results}

Let $\widehat\mu \in \Pc(\X \x A)$ be a solution of the MFG associated to $\widehat{\alpha} \in \Ac$. We introduce the finite population control strategy:
\begin{eqnarray}\label{alphan}
\boldsymbol{\alpha}^n=(\alpha^{n,i})_{i\le n}\in(\Ac_n)^n
&\mbox{with}&
\alpha^{n.i}(x_0,v) := \widehat{\alpha}(x^i_0,v),
~i=1,\ldots,n,
\end{eqnarray}
Recall the population maximum deviation gain $\overline{\Dbf}_n^\delta[\boldsymbol{\alpha}^n]$, introduced in \eqref{Dn}, and let us also introduce in the present context the population maximum gap to the MFE 
\begin{eqnarray}\label{Gn}
\overline{\Gbf}^\delta_n[x_0,\boldsymbol{\alpha}^n]
&:=&
\max_{i\le n} \big|J_i^n(x_0,\boldsymbol{\alpha}^n)
                            - V(x_0,\widehat\mu) 
                     \big|\1_{\{x_0\in B(n^\delta)\}}.
\end{eqnarray} 
Our main results require the following conditions which involve two parameters $p\ge 0$ and  $r>1$.

\newtheorem*{assB}{Assumption $\mathbf{I}_{r,p}$}
\begin{assB}
{\rm (i)} The sequence of empirical measures of $\{\Theta_0^i:=(X^i_0,\hat\alpha(X^i_0, \xi^i))\}_{i\ge 1}$ converges to the law of $\Theta_0:=(X_0, \hat \alpha (X_0, \xi))$ in $\Wc_r$, i.e. 
$$
\mu^n_{\boldsymbol{\Theta}_0} 
:=
\frac1n\sum_{i=1}^n \delta_{\Theta^i_0}
\;\underset{n\to\infty}{\longrightarrow} \;
\Lc_{\Theta_0}~~\mbox{in }\Wc_r;
$$
{\rm (ii)} There exists a continuous map $c:\S\times A\times\R\longrightarrow[0,1)$ such that 
\begin{eqnarray*}
&\displaystyle
   \sup_{m \neq m'} \frac{|F^e_{x_0}(m,a)- F^e_{x_0}(m',a)|}{ \Wc_p(m,m') }\le c(x_0,a,e),
~\mbox{for all}~
(x_0,a,e)\in \S\times A\times\R;
&
\\
&\mbox{and}~~\E \big[ \big(1- \hat c(\bar X_0, \eta)^{1 \vee p}\big)^{-1} \big] < \infty
~\mbox{with}~\hat c(\bar x_0,e):=c(x_0,\hat\alpha(\bar x_0),e).
&
\end{eqnarray*}
\end{assB}

Notice that the last Condition (i) is obviously satisfied in the special case of an iid sequence $(X^i_0)_{i\ge 1}$. This weaker requirement allows to address more general examples, beyond the iid setting, as in the context of non--exchangeable systems.

\begin{theorem} \label{thm:approx_nash_normal}
Let Assumptions ${\rm H}_p$ and ${\rm I}_{r,p}$ hold for some $r>1$ and $p \in \{0\}\cup[1,2 \wedge r]$, and let $\eta\in\L^r$ and consider a solution of the MFG $\hat\mu \in \Pc_r(\X \x A)$ associated to $\widehat{\alpha} \in \Ac$. Then, the strategy $\boldsymbol{\alpha}^{n}$ defined in \eqref{alphan} is an approximate Nash equilibrium in the following senses:

\medskip\medskip
\noindent \hspace{3mm}{\rm (ANE$_{\L^{r}}$)} \hspace{1mm} $
\overline{\Dbf}^\infty_n[\boldsymbol{\alpha}^n]
\underset{n\to\infty}{\longrightarrow} 0$ and $\overline{\Gbf}^\infty_n[X_0,\boldsymbol{\alpha}^n]
\underset{n\to\infty}{\longrightarrow} 0~\mbox{in}~\L^r(\mu_0).
$

\medskip\medskip
\noindent \hspace{3mm}{\rm (ANE$\,^\delta_{\L^{\infty}}$)} \hspace{1mm} For all $\delta\in(0,\frac1p)$, with $\delta=\infty$ when $p=0$, we have 
$$
\overline{\Dbf}^\delta_n[\boldsymbol{\alpha}^n]
\underset{n\to\infty}{\longrightarrow} 0~\mbox{in}~\L^\infty(\mu_0)
~\mbox{and}~
\overline{\Gbf}^\delta_n[x_0,\boldsymbol{\alpha}^n]
\underset{n\to\infty}{\longrightarrow} 0~\mbox{in}~\L^\infty(\S).
$$
In particular, for any small $\eps>0$, the strategy $\boldsymbol{\alpha}^n$ is an $(\varepsilon,\delta)-$Nash equilibrium in the sense of Definition \ref{def:Nash} for sufficiently  large population size $n$.
\end{theorem}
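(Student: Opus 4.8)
The plan is to compare the self-consistent finite-population system with a \emph{decoupled} reference system in which the interaction is frozen at the equilibrium measure $\widehat\mu$, and then to propagate particle-level estimates up to the performance functionals. For the strategy $\boldsymbol\alpha^n$ of \eqref{alphan} I set $a^i:=\widehat\alpha(X^i_0,\xi^i)$ and introduce $\tilde X^i_1:=F^{\eta^i}_{X^i_0}(\widehat\mu,a^i)$, whose conditional law given $X^i_0=x_0$ equals $\Lc^{|X_0=x_0}_{X^{\widehat\mu,\widehat\alpha}_1}$ by the fixed-point property (FP). First I would settle well-posedness of \eqref{eq:X}: for each fixed $n$, the map sending a candidate empirical measure to its $F$-update is, by the contraction bound in Assumption ${\rm I}_{r,p}$(ii) with index coupling, a $\Wc_p$-contraction of random modulus $\max_{i\le n}\hat c(\bar X^i_0,\eta^i)<1$, so Banach's theorem gives a unique $\Xbb^{n,\boldsymbol\alpha^n}_1$. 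The central per-particle estimate $\Delta_i:=|X^{n,\boldsymbol\alpha^n,i}_1-\tilde X^i_1|\le \hat c(\bar X^i_0,\eta^i)\,\Wc_p(\mu^n_{\boldsymbol{\Theta}^n},\widehat\mu)$ then follows directly from Assumption ${\rm I}_{r,p}$(ii).

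Next I would close the fixed-point loop. Writing $W:=\Wc_p(\tilde\mu^n,\widehat\mu)$ for the decoupled empirical error, with $\tilde\mu^n:=\frac1n\sum_i\delta_{(X^i_0,\tilde X^i_1,a^i)}$, index coupling gives $\Wc_p(\mu^n_{\boldsymbol{\Theta}^n},\tilde\mu^n)^{1\vee p}\le\frac1n\sum_i\Delta_i^{1\vee p}$, and combining this with the per-particle bound and the triangle inequality yields a self-improving inequality $S\le\bar c^{1/(1\vee p)}(S+W)$, where $S:=\Wc_p(\mu^n_{\boldsymbol{\Theta}^n},\tilde\mu^n)$ and $\bar c:=\frac1n\sum_i\hat c(\bar X^i_0,\eta^i)^{1\vee p}$. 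Since $\hat c<1$ almost surely forces $\E[\hat c^{1\vee p}]<1$, the law of large numbers gives $\bar c\to\E[\hat c^{1\vee p}]<1$, so $S=O(W)$ and hence $W_n:=\Wc_p(\mu^n_{\boldsymbol{\Theta}^n},\widehat\mu)\le S+W\to0$. The convergence $W\to0$ is a Wasserstein law of large numbers for the reference samples: Assumption ${\rm I}_{r,p}$(i) controls the initial empirical measure, while $\eta\in\L^r$ together with the affine growth of $F$ (Assumption ${\rm H}_p$(ii)) supplies the moments, and I would record both an $\L^{1\vee p}$ rate and the almost-sure mode for later use.

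I would then pass to the functionals. For the gap to the MFE, conditioning on $X^i_0=x_0$ and using the coupling $(X^{n,\boldsymbol\alpha^n,i}_1,\tilde X^i_1)$ gives $\Wc_p\big(\Lc^{|x_0}_{X^{n,\boldsymbol\alpha^n,i}_1},\Lc^{|X_0=x_0}_{X^{\widehat\mu,\widehat\alpha}_1}\big)^{1\vee p}\le\E\big[\hat c(\bar X^i_0,\eta^i)^{1\vee p}W_n^{1\vee p}\mid X^i_0=x_0\big]$; continuity of $U$ on $\Pc_p$ (Assumption ${\rm H}_p$(i)) converts this into $J^n_i(x_0,\boldsymbol\alpha^n)\to V(x_0,\widehat\mu)$, where (IO) identifies $U(\Lc^{|x_0}_{X^{\widehat\mu,\widehat\alpha}_1})=V(x_0,\widehat\mu)$, controlling $\overline\Gbf$. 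For the deviation gain, the decisive point is that one agent's deviation to an arbitrary $\beta$ perturbs the self-consistent measure by only $O(n^{-1/p})$ in $\Wc_p$ — the same contraction argument applied to the single modified coordinate, and crucially \emph{uniform in the deviating action}, since the Lipschitz modulus in Assumption ${\rm I}_{r,p}$(ii) does not depend on $a$. Hence $F^{\eta^i}_{x_0}(\mu^{n,\mathrm{dev}},b^i)$ and $F^{\eta^i}_{x_0}(\widehat\mu,b^i)$ differ by $O(n^{-1/p})$, so $V^n_i(x_0,\boldsymbol\alpha^{n,-i})\le V(x_0,\widehat\mu)+o(1)$; combined with $V^n_i\ge J^n_i\to V(\cdot,\widehat\mu)$, this forces $\Drm^i_n[\boldsymbol\alpha^n]=V^n_i-J^n_i\to0$, controlling $\overline\Dbf$.

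Finally I would separate the two modes. Integrating the conditional bound against $\mu_0$ and using $\hat c\le1$ gives a majorant $\E[W_n^{1\vee p}]\to0$, which, with the matching deviation bound, yields (ANE$_{\L^r}$) after an integrability argument upgrading $\L^{1\vee p}$ to $\L^r$ (here $p\le r$, and the moment condition $\E[(1-\hat c^{1\vee p})^{-1}]<\infty$ secures the uniform integrability that tames the untruncated contribution of agents with large initial states, where $\hat c\uparrow1$). The harder statement is (ANE$^\delta_{\L^\infty}$), and I expect the main obstacle to be controlling $\E\big[\hat c^{1\vee p}W_n^{1\vee p}\mid X^i_0=x_0\big]$ \emph{uniformly} over $x_0\in B(n^\delta)$: the conditioning on the single coordinate $X^i_0=x_0$ must be decoupled from the global fluctuation $W_n$ (the single-agent influence is $O(n^{-1/p})$, negligible but only after a quantification uniform in $x_0$), while the affine growth of $F$ and the possible degeneration $\hat c(x_0,\cdot)\uparrow1$ as $|x_0|\to\infty$ are absorbed on the expanding ball. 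The tradeoff between the Wasserstein rate of $W$ in the moments and the radius $n^\delta$ is exactly what pins down $\delta<\tfrac1p$; the boundary case $p=0$ rests on the bounded cost $c_0=1\wedge|\cdot|$, where no moment obstruction arises and the uniform control extends to all of $\S$, giving $\delta=\infty$. The $(\varepsilon,\delta)$-Nash conclusion is then immediate from $\overline\Dbf^\delta_n\to0$ in $\L^\infty$ and the non-negativity of $\Drm^i_n$.
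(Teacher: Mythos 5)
Your architecture coincides with the paper's: freeze the interaction at $\hat\mu$ to get a decoupled reference system, derive the per-particle Lipschitz bound and the self-improving inequality $S\le \bar c^{1/(1\vee p)}(S+W)$ (this is exactly the paper's estimate leading to $\Wc_p(\mu^n_{\boldsymbol{\Theta}^n},\hat\mu)\to0$), show that a single deviation moves the empirical measure by $O(n^{-1/(1\vee p)})$, and close with continuity of $U$ and the optimality of $\hat\alpha$. The identification of the mechanism behind $\delta<\frac1p$ — the conditioned particle contributes at most $\frac{C}{n}(|x_0|^p+|\eta^i|^p)\le \frac{C}{n}(n^{1-\varepsilon}+|\eta^i|^p)$ to the empirical measure, which vanishes precisely when $\delta p<1$ — is also the paper's. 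However, two steps are asserted rather than proved, and both are where the actual work of the theorem lies.

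First, the uniform control over $x_0\in B(n^\delta)$ is announced as ``the main obstacle'' and described, but not executed. The paper resolves it by extracting maximizing sequences $(i_n,x_0^n)$ for the quantity $\sup_{i\le n}\sup_{x_0\in B(n^\delta)}\E[\,|\widehat X^{n,i}_1-X^{n,i}_1|^q\mid X^i_0=x_0]$ and showing that, under the conditioning $X^{i_n}_0=x_0^n$, the empirical measure still converges to $\hat\mu$ because the remaining $n-1$ particles are unaffected while the conditioned particle's contribution to $\langle\varphi,\mu^n_{\boldsymbol{\hat\Theta}^n}\rangle$ is $O(n^{\delta p-1})$ for any test function $\varphi$ of $p$-polynomial growth. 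Your sketch names the right quantities but does not carry out this conditioning argument, which is the content of the $\L^\infty$ claim; without it you only recover the standard averaged statement.

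Second, your inference ``$F^{\eta^i}_{x_0}(\mu^{n,\mathrm{dev}},b^i)$ and $F^{\eta^i}_{x_0}(\hat\mu,b^i)$ differ by $O(n^{-1/p})$, so $V^n_i(x_0,\boldsymbol\alpha^{n,-i})\le V(x_0,\hat\mu)+o(1)$'' skips a genuine issue: the deviating strategy $\beta\in\Ac_n$ is a function of the whole vector $(\Xbb^n_0,\boldsymbol{\xi}^n)$, whereas $V(x_0,\hat\mu)$ is a supremum over $\Ac=\L^0(\bar\S,A)$, i.e.\ over controls depending only on the agent's own randomized initial state. Since the criterion is $U$ applied to a \emph{conditional law given $X^i_0$}, one must show that the extra randomization coming from the other players' data cannot improve the value against the frozen measure $\hat\mu$. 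The paper does this via a Blackwell--Dubins representation: there is a Borel map $M$ with $\Lc_{M(\xi,x)}=\Lc^{|X^i_0=x}_{(\Xbb^n_0,\boldsymbol{\xi}^n)}$, so that $\bar x_0\mapsto\beta(M(\bar x_0))$ is an element of $\Ac$ achieving the same conditional law, whence $U(\Lc^{|X^i_0=x_0}_{F^{\eta^i}_{x_0}(\hat\mu,\beta(\Xbb^n_0,\boldsymbol{\xi}^n))})\le\sup_{\alpha\in\Ac}U(\Lc^{|X_0=x_0}_{X^{\hat\mu,\alpha}_1})$. Your proposal needs this (or an equivalent absorption-of-randomization argument) to legitimize the comparison with the mean-field value; as written, the inequality $V^n_i\le V(\cdot,\hat\mu)+o(1)$ does not follow.
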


\begin{remark}\label{rem:interpret_thm_approx_Nash}
{\rm Theorem~\ref{thm:approx_nash_normal} establishes two complementary notions of approximate Nash equilibrium for the strategy profile $\boldsymbol{\alpha}^n$ induced by the mean field equilibrium $\widehat{\alpha}$. The first statement, {\rm (ANE$_{\L^r}$)}, yields convergence of the population maximum deviation gain in $\L^r(\mu_0)$, showing that deviation incentives vanish on average with respect to the distribution of initial states. In particular, this implies that profitable unilateral deviations become asymptotically negligible for a typical agent as the population size grows.

The second statement, {\rm (ANE$^\delta_{\L^\infty}$)}, is stronger, as it provides a uniform control of deviation gains for all agents whose initial states belong to the expanding ball $B(n^\delta)$. This control is also uniform with respect to the agent index \(i\), since it bounds the supremum of deviation gains simultaneously over the entire truncated population. The restriction $\delta<1/p$ (with $\delta=\infty$ when $p=0$) reflects the trade-off between uniform estimates on compact subsets of the state space and the regularity assumptions imposed on the data, in particular on the functional $F$.

Taken together, these results show that the only potential obstruction to a full $\varepsilon$--Nash equilibrium arises from agents whose initial states lie in the far tails of the initial distribution. As the radius $n^\delta$ diverges with $n$, the truncated region captures an increasingly large portion of the population starting points. Consequently, for large population sizes, the strategy $\boldsymbol{\alpha}^n$ satisfies the equilibrium condition of {\rm Definition~\ref{def:Nash} } up to an arbitrarily small error $\varepsilon$ on this dominant subpopulation. This uniform approximation result refines existing asymptotic Nash equilibrium notions and provides a sharper connection between finite-player symmetric games and their mean field game limits.
}
\end{remark}

\begin{proof}[Proof of \Cref{thm:approx_nash_normal}]
We first prove that ANE$_{\L^{r}}$ is a consequence of ANE$^\delta_{\L^{\infty}}$. To see this, we write
\begin{align*}
    \E\left[\overline{\Gbf}^\infty_n[X_0,\boldsymbol{\alpha}^n]^{r}
    \right] 
    &\le 
    \E\left[\overline{\Gbf}^\infty_n[X_0,\boldsymbol{\alpha}^n]^{r}
        \mathbf{1}_{\{ |X_0| \ge n^\delta \}}
    \right]
    + 
    \sup_{x_0\in B(n^\delta)}\overline{\Gbf}^\infty_n[x_0,\boldsymbol{\alpha}^n]^{r}
    \\
    &\le 
    \E\left[\overline{\Gbf}^\infty_n[X_0,\boldsymbol{\alpha}^n]^{2r}
    \right]^{1/2}
    \P\big[ |X_0| \ge n^\delta \big]^{1/2}
    + 
    \sup_{x_0\in B(n^\delta)}\overline{\Gbf}^\infty_n[x_0,\boldsymbol{\alpha}^n]^{r},
\end{align*}
where the last term on the right-hand side converges to zero by ANE$_{\L^{\infty}}$. We also have that $\P[ |X_0| \ge n^\delta]\longrightarrow 0$ as $n\to\infty$ and 
$$
       \E\Big[
        \sup_{n \ge 1}\overline{\Gbf}^\infty_n[X_0,\boldsymbol{\alpha}^n]^{2r}
          \Big]
        =\lim_{K\to\infty}\!\!\!\!\uparrow
       \E\Big[
        \sup_{n \ge 1}\overline{\Gbf}^\infty_n[X_0,\boldsymbol{\alpha}^n]^{2r}
        \mathbf{1}_{\{ |X_0| \le K \}}
          \Big] 
          \le 
          \sup_{n \ge 1}\sup_{|x_0|\le n^\delta}
          \!\!\overline{\Gbf}_n[x_0,\boldsymbol{\alpha}^n]^{2r}
    < \infty,
    $$
again by ANE$^\delta_{\L^{\infty}}$. We deduce the required convergence $\E\big[\overline{\Gbf}^\infty_n[X_0,\boldsymbol{\alpha}^n]^{r}\big] \longrightarrow 0$ as $n\to\infty$. The same argument applies to show that $\E\big[\overline{\Dbf}^\infty_n[\boldsymbol{\alpha}^n]^r\big]
\underset{n\to\infty}{\longrightarrow} 0$.

\vspace{2mm}

\noindent We next prove ANE$^\delta_{\L^{\infty}}$ in several steps.

\medskip
\noindent {\bf 1.} {\it Propogation of chaos.} Denote $\mathbf{X}^n:=\mathbf{X}^{n,\boldsymbol{\alpha}^n}$ with $\boldsymbol{\alpha}^n$ defined by \eqref{alphan} through the MFE strategy $\widehat\alpha$, and define $\widehat{\Xbb}^n_1:=(\widehat{X}^{n,1}_1,\dots, \widehat{X}^{n,n}_1)$ by substituting the MFE $\widehat\mu$ to the empirical measure:
\begin{eqnarray}\label{Xhat}
    \widehat{X}^{n,i}_1=F^{\eta^i}_{X^i_0} (\hat\mu, \alpha^{n,i} ),~i=1,\ldots,n,
\end{eqnarray}
and denote $\boldsymbol{\hat\Theta}^n=(\widehat{\Theta}^{i,n})_{1\le i\le n}$ with $\widehat{\Theta}^{i,n}:=(\widehat{X}^{i,n},\alpha^{n,i})$. Then, with $\{\Theta^i_0:=(X^i_0,\hat\alpha(\bar X^i_0))\}_{i\ge 1}$ as in Assumption ${\rm I}_{r,p}$ (i) and denoting $\mub^n_{\boldsymbol{\Theta}^n_0}:=\frac{1}{n} \sum_{i=1}^n \delta_{(\Theta_0^i,\eta^i)}$, we have 
\begin{align}
    \mu^n_{\boldsymbol{\hat\Theta}^n}
    &:= 
    \frac{1}{n} \sum_{i=1}^n \delta_{\widehat{\Theta}^{i,n}}
    =
    \int_{\S \x A \x \R} \delta_{(x_0,F^{e}_{x_0} (\hat\mu, a ), a)} \mub^n_{\boldsymbol{\Theta}^n_0}(\mathrm{d}x_0,\mathrm{d}a,\mathrm{d}e)
    \stackrel{\Wc_r}{-\!\!-\!\!\!\longrightarrow}
    \Lc_{(X^{\hat\mu,\hat\alpha},\, \hat \alpha (\bar X_0))}=\muh
    \label{eq:strong_poc}
\end{align}
as $n \to \infty$, by Assumption ${\rm I}_{r,p}$ (i), as $(\bar X^i_0)$ is independent of $\eta_i$, the sequence $(\eta_i)_{i\ge 1}$ is iid, and $F$ is continuous and has linear growth.

Let $\widehat c(\bar x_0,e):=c(x_0,\hat\alpha(\bar x_0),e)$ as in Assumption ${\rm I}_{r,p}$ and denote $\boldsymbol{\Theta}^n:=(\Xbb^n,\alphab^n)$. By the Lipschitz property of $F$ in Assumption ${\rm I}_{r,p}$ (ii) and the triangle inequality, we see that
\begin{eqnarray}
    \left| \widehat{X}^{n,i}_1 - {X}^{n,i}_1 \right| 
    &=& 
    \big| F^{\eta^i}_{X^i_0} (\hat\mu, \hat{\alpha} (\bar X^{i}_0,\hat\mu))
            - F^{\eta^i}_{X^i_0} (\mu^n_{\boldsymbol{\Theta}^n}, \hat{\alpha} (\bar X^{i}_0, \hat\mu))
    \big|
    \nonumber\\
    &\le&
    \hat c(\bar X^i_0\eta^i)\big[ \Wc_p (\mu^n_{\boldsymbol{\widehat\Theta}^n}, \hat\mu) 
                                  + \Wc_p (\mu^n_{\boldsymbol{\widehat\Theta}^n}
                                                  , \mu^n_{\boldsymbol{\Theta}^n}) 
                        \big].
\label{XniXbarni}
\end{eqnarray}
Then, denoting $q:=1 \vee p$ and recalling that $p<2$, it follows that
\begin{align*}
\Wc_p \big(\mu^n_{\boldsymbol{\widehat\Theta}^n}, \mu^n_{\boldsymbol{\Theta}^n}\!
           \big)^q
&\le 
\frac{1}{n} \sum_{i=1}^n \left|X^{n,i}_1\!-\! \widehat{X}^{n,i}_1 \right|^{q}
\le
\big[ \Wc_p (\mu^n_{\boldsymbol{\widehat\Theta}^n}, \hat\mu)^q 
                                  \!+\! \Wc_p (\mu^n_{\boldsymbol{\widehat\Theta}^n}
                                                  , \mu^n_{\boldsymbol{\Theta}^n})^q 
                        \big]
\frac{1}{n} \sum_{i=1}^n \hat c(X^i_0,\xi^i,\eta^i)^q,
\end{align*}
where the map $\hat c$ takes values in $[0,1)$ by Assumption ${\rm I}_{r,p}$ (ii). This provides 
\begin{equation}\label{XnXbarn}
    \Wc_p \big(\mu^n_{\boldsymbol{\widehat\Theta}^n}, \mu^n_{\boldsymbol{\Theta}^n}\!
           \big)^q
    \le C_n \Wc_p \big(\mu^n_{\boldsymbol{\widehat\Theta}^n}, \hat\mu\big)^q,
    ~\mbox{a.s. with}~
    C_n
    :=
    \frac{\frac{1}{n} \sum_{i=1}^n \hat c(X^i_0,\xi^i,\eta^i)^q}
           {1-\frac{1}{n} \sum_{i=1}^n \hat c(X^i_0,\xi^i,\eta^i)^q} 
            .
\end{equation}
As $X^i_0$, $\xi^i$ and $\eta^i$ are independent, and   $(\eta_i)_{i\ge 1}$ is iid, it follows from Assumption ${\rm I}_{r,p}$ (i) that $C_n\longrightarrow \frac{\E[\hat c(X_0,\xi,\eta)^q]}{1-\E[\hat c(X_0,\xi,\eta)^q]}<\infty$, a.s. as $n\to\infty$. Then, we see from \eqref{XnXbarn} together with \eqref{eq:strong_poc} that $\Wc_p(\mu^n_{\boldsymbol{\widehat\Theta}^n}, \mu^n_{\boldsymbol{\Theta}^n})\longrightarrow 0$, a.s. and therefore
$$
\Wc_p \big(\mu^n_{\boldsymbol{\Theta}^n},\hat\mu\big)\longrightarrow 0,
~\mbox{and}~ 
 \sup_{i \le n} \left| \widehat{X}^{n,i}_1 \!-\! {X}^{n,i}_1 \right|^q 
    \stackrel{n \to \infty}\longrightarrow 0,
~\mbox{a.s. for all}~i \le n,
$$
by \eqref{XniXbarni} together with the fact that $\widehat{c} \le 1$. 

\medskip
\noindent {\bf 2.} In this step, we fix some $\delta<\frac1{p}$ and show that
\begin{align} \label{eq:uniform_state_index}
    H^n:=
    \sup_{i \le n} \sup_{x_0\in B(n^\delta)}\E \left[\left| \widehat{X}^{n,i}_1 \!-\! {X}^{n,i}_1 \right|^q \mid X^i_0=x_0 \right]
    \underset{n\to\infty}{\longrightarrow} 0~\mbox{as}~n\to\infty,
\end{align}
which in turn implies that
$$
\sup_{i \le n} \sup_{x_0\in B(n^\delta)}
\Wc_p \left( \Lc^{|X^i_0=x_0}_{X^{n,i}_1},\, \Lc^{|X^i_0=x_0}_{\Xh^{n,i}_1} \right)^q 
\le 
H^n \stackrel{n \to \infty}\longrightarrow 0,
$$
and we may then deduce our first required result by the continuity of $U$ in Assumption ${\rm H}_p$:
\begin{align*}
\overline{\Gbf}_n[x_0,\boldsymbol{\alpha}^n]\1_{\{x_0\in B(n^\delta)\}}
\underset{n\to\infty}{\longrightarrow} 0~\mbox{in}~\L^\infty(\S).
\end{align*}
To prove \eqref{eq:uniform_state_index}, let $(i_n)_{n \ge 1}$ and $(x^n_0)_{n \ge 1}$ be the sequence of maximizers of $H^n$.
Consider again the state process $\widehat\Xbb^n$ defined in \eqref{Xhat} and gthe corresponding $\boldsymbol{\Theta}^n$ with initial state $X^{i_n}_0=x^{n}_0\in B(n^\delta)$, or equivalently $|x^{n}_0|^p \le n^{1-\varepsilon}$ with $\varepsilon:=1-\delta p>0$. Then, we estimate for all continuous map $\varphi$ with $p-$polynomial growth that:
\begin{align}\label{crucial}
\langle\varphi,\mu^n_{\boldsymbol{\widehat\Theta}^n}\rangle
:=
\int \!\varphi \,d\mu^n_{\boldsymbol{\widehat\Theta}^n}
&=
\frac{1}{n} \sum_{j=1}^n \varphi(X^j_0,\widehat X^j_1,\hat \alpha (\bar X^j_0)) \nonumber
\\
&
=
\frac1n\varphi\big(x_0^{n} ,
                             F^{\eta^{i_n}}_{x_0^{n}}\big(\hat\mu,
                                                                     \hat\alpha(x_0^{n},\xi^{i_n})
                                                              \big), \hat \alpha (x^n_0,\xi^{i_n})
                      \big) 
+ \phi_n,
\end{align}
with $\phi_n:=\frac{1}{n} \sum_{j=1,\, j \neq i_n }^n 
\varphi(X^j_0,\widehat X^j_1, \hat \alpha (\bar X^j_0))\longrightarrow \langle \varphi,\hat\mu\rangle$, a.s. by Assumption ${\rm I}_{r,p}$ (i) due to the fact that $\widehat X^j_1=F^{\eta^j}_{X^j_0}(\hat\mu, \hat\alpha(\bar X^j_0))$ together with the independence between $\eta^j$, $\xi^j$ and $X^j_0$. In addition, the nonlinearity $F$ has affine growth in $(x_0,e)$ uniformly in $(\mu,a)$ by Assumption ${\rm H}_p$ (ii). Then it follows from the boundedness condition on $\rho_A$ together with the $p-$polynomial growth of $\varphi$ that
\begin{align*}
\frac{1}{n} \varphi\big( x^{n}_0, F^{\eta^{i_n}}_{x^{n}_0} \big( \muh, \widehat{\alpha}(x^{n}_0,\xi^{i_n}) \big),\,\alphah (x^n_0,\xi^{i_n}) \big) 
\le 
\frac{C}{n}\big(|x^{n}_0|^p + |\eta^{i_n}|^p  \big) 
\le 
\frac{C}{n}\big(n^{1-\varepsilon} + |\eta^{i_n}|^p  \big) 
\underset{n \to \infty}{\longrightarrow} 0,
\end{align*}
regardless of the behavior of the sequence $(i_n,x_0^{n})_n$. By \eqref{crucial} this yields $\langle\varphi,\mu^n_{(\widehat\Xbb^n,\alphab^n)}\rangle\longrightarrow \langle \varphi,\hat\mu\rangle$, a.s. and it follows from the arbitrariness of $\varphi$ that
\begin{align} \label{eq:unif_empirical}
    \lim_{n \to \infty} 
\E \big[\Wc_p(\mu^n_{\hat{\boldsymbol{\theta}}^n},\hat\mu)^q \big| X^{i_n}_0\!=\!x_0^{n} \big]=0,
~\mbox{and}~
\lim_{n \to \infty} 
\E \big[\Wc_p(\mu^n_{\hat{\boldsymbol{\theta}}^n},\hat\mu)^q \big| X^{i_n}_0\!=\!x_0^{n} \big]=0,
\end{align}
which implies \eqref{eq:uniform_state_index} by using \eqref{XniXbarni}.

\medskip

\noindent {\bf 3. {\it Optimization and deviating player.}} 
For all $n \ge 1$ and $i=1,\ldots, n$, we denote by $\betab^{i,n}:=\alpha^{n,(-i)}\oplus\beta$ and $\Ybb^{n,i}:=\Xbb^{n,\betab^{i,n}}$, where we recall that $\alpha^{n,(-i)}\oplus\beta$ is obtained from $\boldsymbol{\alpha}^n$ by replacing the $i-$th coordinate $\alpha^{n,i}$ of $\boldsymbol{\alpha}^n$ by the arbitrary deviating strategy $\beta \in \Ac_n$. We also denote $\boldsymbol{\Theta}^{n,i,\beta}=(\Ybb^n,\boldsymbol{\beta}^{i,n})$ and we recall that $\boldsymbol{\Theta}^n=(\Xbb^n,\boldsymbol{\alpha})$. With $q=1 \vee p$ and using the decomposition $\mu^n_{\boldsymbol{\Theta}^{n,i,\beta}}=\frac{1}{n} (\delta_{(X^i_0,\Ybb^{n,i,i}_1, \beta(\bar X_0))}+\sum_{j\neq i} \delta_{(X^j_0,\Ybb^{n,i,j}_1, \hat \alpha(X^j_0,\xi^j))})$, we obtain by following the same line of argument as in the previous step:
\begin{align*}
    \Wc_p( \mu^n_{\boldsymbol{\Theta}^n},\mu^n_{\boldsymbol{\Theta}^{n,i,\beta}})^{q} 
    &\le 
    \frac{1}{n}\|\rho_A\|^q_{\L^\infty_{A\times A}}
    +|Y^{n,i,i}_1 - {X}^{n,i}_1 |^{q}
    +\sum_{j \neq i} |Y^{n,i,j}_1 - {X}^{n,j}_1 |^{q}
    \\
    &\le \frac{1}{n}\|\rho_A\|^q_{\L^\infty_{A\times A}} 
       + |Y^{n,i,i}_1 - {X}^{n,i}_1 |^{q}
    +\sum_{j \neq i} \hat c (\bar X^j_0,\eta^j)^q\,
                                                 \Wc_p ( \mu^n_{\boldsymbol{\Theta}^n}, \mu^n_{\boldsymbol{\Theta}^{n,i,\beta}})^{q} .
\end{align*}
Denoting $\tilde C_{n,i}^{-1} := 1 - \frac{1}{n} \sum_{j \neq i} \hat c (X^j_0,\xi^j,\eta^j)^q$, this leads to:
\begin{align*}
    \Wc_p( \mu^n_{\boldsymbol{\Theta}^n},\mu^n_{\boldsymbol{\Theta}^{n,i,\beta}})^{q} 
    \le 
    \tilde C_{n,i}\;
    \frac{1}{n} \big( \|\rho_A\|^q_{\L^\infty_{A\times A}} + \big|Y^{n,i,i}_1 - {X}^{n,i}_1 \big|^{q} \big).
\end{align*}
Then, for any sequence $(i_n,x^n_0)_{n \ge 1}$ with $i_n \in \{1,\cdots, n\}$ and $|x_0^n|^p \le n^{1- \varepsilon}$, we obtain
\begin{align*}
    \E\big[\Wc_p( \mu^n_{\boldsymbol{\Theta}^n},\mu^n_{\boldsymbol{\Theta}^{n,i_n,\beta}})^{q} 
               | X^{i_n}_0=x^n_0 
        \big] 
    \le 
    \frac{C}{n} (1+|x^n_0|^q)
    \E[ \tilde C_{n,i_n} | X^{i_n}_0\!=\! x^n_0 ] ,
\end{align*}
for some constant $C >0$ independent of $n$, depending only on $\|\rho_A\|^q_{\L^\infty_{A\times A}}$ and the linear growth parameter of $F$. By Assumption ${\rm I}_{r,p}$ (ii), it follows that $\lim\E[ \tilde C_{n,i_n} | X^{i_n}_0\!=\! x^n_0 ]= \E \big[ (1\!-\!c (X_0, \hat \alpha (X_0,\xi), \eta)^{1 \vee p})^{-1}\big] <\infty$, this shows that $\E\big[\Wc_p( \mu^n_{\boldsymbol{\Theta}^n},\mu^n_{\boldsymbol{\Theta}^{n,i_n,\beta}})^{q} 
               | X^{i_n}_0\!=\! x^n_0 
        \big]\longrightarrow 0$ as $n\to\infty$, and by the arbitrariness of $(i_n,x^n_0)_{n \ge 1}$, we deduce that
$$
    \Lim_{n \to \infty}\sup_{i \le n} \sup_{x_0\in B(n^{\delta})}
    \E\left[\Wc_p( \mu^n_{\boldsymbol{\Theta}^n},\mu^n_{\boldsymbol{\Theta}^{n,i_n,\beta}})^{q} | X^{i_n}_0=x_0 \right] =0.
$$
Combining with \eqref{eq:unif_empirical}, we see that: 
$$
\lim_{n \to \infty} \sup_{i \le n} \sup_{x_0\in B(n^{\delta})} 
\E \big[\Wc_p(\mu^n_{\boldsymbol{\Theta}^{n,i_n,\beta}},\hat\mu)^q \big| X^i_0=x_0 \big]=0
$$
and
$$
    \lim_{n \to \infty}  \sup_{i \le n} \sup_{x_0\in B(n^{\delta})} \sup_{j \neq i} \E \big[ |X^{n,j}_1-Y^{n,i,j}_1|^q 
                                   \big| X^i_0=x_0 
                                   \big]=0.
$$

\medskip 
\noindent {\bf 4.} {\it Approximate NE.} 
Denote $\gamma^n:=\sup_{i \le n}\| D_i^n[\boldsymbol{\alpha}^n] \mathbf{1}_{\{X^i_0\in B(n^\delta)\}} \|_{\L^\infty(\mu_0)}$, and let $(i_n)_{n \ge 1} \subset \N^\star$ and $(\beta^{i,n},x_0^{i,n})_{1 \le i \le n}$ be such that
\begin{align}\label{deltain}
\gamma^n - 2^{-n} &\le J_{i_n}^n(x_0^{i_n,n},\alpha^{n,(-i_n)}\!\oplus\!\beta^{i_n,n})
    -
    J_{i_n}^n(x_0^{i_n,n},\alpha^n). \nonumber 
\end{align}
Applying the result of Step 2 to $\Ybb^{n}:=\Xbb^{n,\alpha^{n,(-i_n)}\oplus\beta^{i_n,n}}$, we deduce that
\begin{align*}
    \Wc_p \Big( \Lc^{| X^{i_n}_0=x_0^{i_n,n}}_{Y^{n,i_n}_1},
                        \Lc_{F^{\eta^{i_n}}_{X_0^{i_n}}(\hat\mu,\beta^{i_n,n}(\bar\Xbb^n_0))}
                             ^{| X^{i_n}_0=x_0^{i_n,n}}\Big) 
    \le 
    \E \big[ \Wc_p \left( \mu^n_{\boldsymbol{\Theta}^{n,i_n,\beta}},\muh \right)  
                \big| X^{i_n}_0 = x^{i_n,n}_0
         \big] 
    \stackrel{n \to \infty}\longrightarrow 0.
\end{align*}
By our continuity conditions on $U$ in Assumption ${\rm H}_p$, this provides
\begin{align*}
    \lim_{n \to \infty} \Big|  J_{i_n}^n(x^{i_n,n}_0,\alpha^{n,(-i_n)}\!\oplus\!\beta^{i_n,n})
                                        - U \Big(\Lc_{F^{\eta^{i_n}}_{X_0^{i_n}}(\hat\mu, \beta^{i_n,n}(\Xbb^n_0,\boldsymbol{\xi}^n))}^{| X^{i_n}_0=x_0^{i_n,n}}\Big) 
                               \Big|=0,
\end{align*}
and
\begin{align*}
    \lim_{n \to \infty}\Big| J_{i_n}^n(x^{i_n,n}_0,\alpha^n) 
                                     - U \big( \Lc_{X^{\mu,\hat{\alpha}}_1}
                                                        ^{| X_0=x_0^{i_n,n}}
                                           \big) 
    \Big|=0.
\end{align*}
Then it follows from the definition of $\delta^n$ that:
\begin{eqnarray}
0\;\le\;
\limsup_{n \to \infty} \gamma^n 
&\le&
\limsup_{n \to \infty} J_{i_n}^n(x^{i_n,n},\alpha^{n,(-i_n)}\!\oplus\!\beta^{i_n,n})
-
J_{i_n}^n(x^{i_n,n},\alpha^n)
\nonumber \\
&=&
\limsup_{n \to \infty} U \Big(\Lc_{F^{\eta^{i_n}}_{x_0^{i_n,n}}
                                                     (\hat\mu, \beta^{i_n,n}(\Xbb^n_0,\boldsymbol{\xi}^n))}
                                                 ^{| X^{i_n}_0=x_0^{i_n,n}} 
                                     \Big)
                                 -U \big(\Lc_{X^{\hat\mu,\hat{\alpha}}_1}
                                                  ^{| X_0=x_0^{i_n,n}}
                                      \big).
\label{conv-relax}
\end{eqnarray}
By \citeauthor*{blackwellDubins83} \cite{blackwellDubins83}, we may find a Borel map $M: [0,1]\x \S \to \S^{n-1} \x [0,1]^n$ such that 
$$
\Lc_{M (\xi,x)}= \Lc_{(\Xbb^n_0, \boldsymbol{\xi}^n)}^{| X^i_0=x}, 
~\mbox{and therefore}~
\Lc_{(\eta^i,\beta^{i,n}(\Xbb^n_0,\boldsymbol{\xi}^n))}^{|X^i_0=x^{i,n}_0}=\Lc_{(\eta^i, \beta^{i,n} (M(\xi,x^{i,n}_0)) )}.
$$ 
As $\bar x_0 \mapsto \beta^{i,n} (M(\bar x_0)) $ belongs to $\Ac$, it follows that
\begin{align*}
U \Big(\Lc_{F^{\eta^{i_n}}_{x_0^{i,n}}
                    (\hat\mu, \beta^{{i_n},n}(\Xbb^n_0,\boldsymbol{\xi}^n))}
                ^{| X^{i_n}_0=x_0^{{i_n},n}} 
   \Big)
&\le \sup_{\beta^n \in \Ac_n}  
       U \Big( \Lc_{F^{\eta^{i_n}}_{X_0^{i_n}}(\hat\mu, \beta^{n}(\Xbb^n_0,\boldsymbol{\xi}^n))}
                        ^{| X^{i_n}_0=x_0^{{i_n},n}} 
       \Big)
\\
&= \sup_{\alpha \in \Ac} U \big( \Lc_{X^{\hat\mu,\alpha}_1}
                                                     ^{|X_0=x_0^{{i_n},n}} 
                                        \big) 
= U \big(\Lc_{X^{\hat\mu,\hat{\alpha}}_1}
                  ^{| X_0=x_0^{{i_n},n}}
       \big),
\end{align*}
which in view of \eqref{conv-relax} implies that $\gamma^n\longrightarrow 0$ as $n\to\infty$.
\end{proof}

\section{Continuous time approximate Nash equilibria}
\label{sec:continuous}

In this section, we extend our approximate Nash equilibria results to the continuous time setting. For simplicity, we consider a coupling through the law of the state process only, and not the joint law of the state-control as in the the previous one period setting. We shall also restrict for simplicity to the uncontrolled diffusion case.

\subsection{The continuous time mean field game}

Let $p \in \{0\} \cup [1,\infty)$, $W$ be a Brownian motion on the filtered probability space $(\Om,\Fc,\F,\P)$, and
$$
b:[0,T]\times\R^{2d}\times\Pc_p(\R^{2d})\times A
   \longrightarrow \R^d.
$$
Given an initial condition $X_0$ independent of $W$, and a measurable flow of probability measures $(\mu_t)_{t\in[0,T]} \subset \Pc(\R^{2d})$, we define the controlled state process $X^{\mu,\alpha}$ through the controlled SDE
\begin{eqnarray}\label{SDE}
\mathrm{d}X_t 
= b\big(t,\underline{X}_t,\mu_t,\alpha(t,\underline{X}_t)\big)\,\mathrm{d}t 
       + \mathrm{d}W_t, 
~\mbox{with}~\underline{X}_t:=(X_0,X_t),~ t\in[0,T].
\end{eqnarray}
Here, the control strategy $\alpha$ lies in the set $\Ac$ of bounded Borel $\F-$progressively measurable maps from $[0,T] \times \R^{2d}$ to some compact set $A$ of a Polish space. We shall impose appropriate conditions on $b$ to guarantee the existence of a unique solution $X^{\mu,\alpha}$. 

Similar to the static setting, we consider a performance criterion defined through a map $U:\Pc_p(\R^d)\longrightarrow\R$, and we define the optimization problem
 \[
 V(\mu):=\sup_{\alpha\in\Ac}J(\mu,\alpha)
  ~\mbox{where}~
  J(\mu,\alpha) 
  := 
  \E\big[U\big(\Lc^{|X_0}_{X^{\mu,\alpha}_T}\big)
     \big],
\]
By standard measurable selection, it follows that
\[
 V(\mu):=\E[V(X_0,\mu)]
  ~\mbox{with}~
  V(X_0,\mu) 
  = 
  \esup_{\alpha} U\big(\Lc^{|X_0}_{X^{\mu,\alpha}_T}\big),
\]
where the dependence of $\alpha$ on $X_0$ in the last $\esup$ is superfluous. We next introduce a precise definition of a solution to the mean field game by the standard representative agent individual optimality (IO), given the environment, and the consistency property expressed as a fixed point (FP) condition.

\begin{definition}
A measurable flow $\widehat{\mu} = (\widehat\mu_t)_{t \in [0,T]} \subset \Pc(\R^{2d})$ 
is a mean field game (MFG) solution associated with a control 
$\widehat{\alpha} \in \Ac$ if the following two conditions hold:
\begin{enumerate}
    \item[\textnormal{(IO)}] 
    Given $\widehat\mu$, the control $\widehat{\alpha}$ solves the individual objective, i.e. $J(\widehat\mu,\widehat{\alpha}) = \sup_{\alpha \in \Ac} J(\widehat\mu,\alpha);$
\item[\textnormal{(FP)}] The optimally controlled dynamics is consistent with the given envioronment $\widehat\mu$, i.e. $\P \circ (X_0,X^{\hat\mu,\hat{\alpha}}_t)^{-1} = \widehat\mu_t,$ for all $t \in [0,T].$
\end{enumerate}
\end{definition}

The following condition is needed to guarantee the existence of a solution to the MFG problem. 

\newtheorem*{assE1}{Assumption $\mathbf{E}_p$}
\begin{assE1}
{\rm (i)} $U$ is bounded, concave and continuous in $\Pc_p(\R^d)$;

\medskip 
\noindent 
{\rm (i)} $b$ is bounded, Lipschitz in $(x,m)$ uniformly in $(t,a)$, and continuous in $(x,m,a)$ for all fixed $t$;

\medskip 
\noindent {\rm (iii)} The control space $A$ is convex and, $(t,x,m)\in[0,T]\times\R^{2d}\times\Pc(\R^{2d})$ the set
\[
    \big\{b(t,x,m,a)\in\R^{d}:~a\in A
    \big\}
    ~\mbox{is convex.}
\]
\end{assE1}

\begin{theorem}
Let Assumption ${\rm E}_p$ hold and let $\mu_0\in\Pc_q(\R^d)$, for some $q>p$. Then, there exists an {\rm MFG} solution.
\end{theorem}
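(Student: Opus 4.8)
The plan is to construct the solution by a fixed-point argument on the space of time-marginal flows. Since the diffusion coefficient is the identity and $b$ is bounded, the entire construction can be carried out over a fixed Wiener reference measure via Girsanov's theorem, which linearises the dependence on the control and furnishes uniform estimates. First I would set up a convex compact domain of candidate flows: from $X^{\mu,\alpha}_t = X_0 + \int_0^t b\,\mathrm ds + W_t$ with $\|b\|_\infty<\infty$ and $\mu_0\in\Pc_q(\R^d)$ for $q>p$, every attainable marginal $\Lc_{(X_0,X^{\mu,\alpha}_t)}$ has $q$-th moment bounded by a constant $M$ depending only on $\mu_0,\|b\|_\infty,T$, and the bounded-drift unit-diffusion structure gives the equicontinuity bound $\Wc_p(\mu_s,\mu_t)\le C|t-s|^{1/2}$. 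Let $\Kc$ be the set of flows $\mu\in C([0,T];\Pc_p(\R^{2d}))$ satisfying these two bounds; it is convex, and compact in the uniform $\Wc_p$ metric by Arzelà–Ascoli together with the fact that a uniform $q$-th moment bound with $q>p$ yields $\Wc_p$-relative compactness of the marginals.

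For fixed $\mu\in\Kc$ I would solve the individual optimisation $V(\mu)=\sup_\alpha J(\mu,\alpha)$ in weak form. Let $\Rc(\mu)$ be the set of path-laws under which the canonical process is an Itô process with unit diffusion and drift $\gamma_t\in\{b(t,\underline X_t,\mu_t,a):a\in A\}$ for a.e. $(t,\omega)$. By Girsanov each such law has bounded density with respect to the law of $X_0+W$, so $\Rc(\mu)$ is tight and closed under weak limits; since $U$ is continuous on $\Pc_p$ (Assumption ${\rm E}_p$(i)), it attains its maximum, giving an optimiser. The crucial role of the convexity condition ${\rm E}_p$(iii) is that $\Rc(\mu)$ is \emph{convex}: for a mixture $\lambda\P^1+(1-\lambda)\P^2$ the $\Fc^X$-drift equals the conditional average $w_t\gamma^1_t+(1-w_t)\gamma^2_t$ for an adapted weight $w_t\in[0,1]$, which again lies in the convex velocity set, while the quadratic variation stays $t\,\mathrm{Id}$, so the mixture is again attainable. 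Because all laws in $\Rc(\mu)$ share the common initial marginal $\mu_0$, the conditional laws $\Lc^{|X_0}_{X_T}$ are affine in $\P$, so concavity of $U$ makes $J(\mu,\cdot)$ concave and the optimiser set convex; a Filippov-type measurable selection then produces an admissible $\widehat\alpha\in\Ac$ realising an optimal drift. I denote by $\Phi(\mu)\subset\Kc$ the resulting nonempty convex compact set of optimally controlled flows.

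I would then verify that $\Phi$ is upper hemicontinuous with closed graph: if $\mu^n\to\mu$ in $\Kc$ and $\nu^n\in\Phi(\mu^n)$ with $\nu^n\to\nu$, then continuity of $b$ in $(x,m,a)$ and of $U$, the uniform Girsanov density bounds, and a stability estimate for the SDE in the argument $\mu$ (Gronwall) give that $\nu$ is the flow of an optimally controlled process for $\mu$, i.e. $\nu\in\Phi(\mu)$. With $\Kc$ convex and compact and $\Phi$ an upper hemicontinuous correspondence with nonempty convex compact values, the Kakutani–Fan–Glicksberg theorem yields $\widehat\mu\in\Phi(\widehat\mu)$. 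Its associated optimiser $\widehat\alpha$ satisfies (IO) by construction, and $\widehat\mu\in\Phi(\widehat\mu)$ is precisely the consistency condition (FP).

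I expect the delicate point to be the optimisation step: making $\Rc(\mu)$ simultaneously compact (for attainment of $\sup U$) and convex (for convex-valuedness of $\Phi$, as required by Kakutani), and then passing from an optimal \emph{weak} drift back to an admissible control in $\Ac$ realising it, via measurable selection and, if needed, a Markovian mimicking argument. Condition ${\rm E}_p$(iii) is exactly the Filippov–Roxin convexity tailored to make this selection possible and to keep the optimiser set convex, while the moment assumption $\mu_0\in\Pc_q$ with $q>p$ is what reconciles the $\Pc_p$-continuity of $U$, the compactness of $\Kc$, and the upper hemicontinuity of $\Phi$ within a single consistent topology.
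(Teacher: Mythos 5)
Your overall architecture (weak/relaxed formulation of the individual problem, compactness and convexity of the admissible set, Kakutani--Fan--Glicksberg, then recovery of a strict feedback control via Filippov selection and mimicking, with concavity of $U$ entering at the de-randomization step) matches the paper's. However, there is a genuine gap at the single point where this problem differs from a standard MFG: the cost is $U\big(\Lc^{|X_0}_{X^{\mu,\alpha}_T}\big)$, a function of the \emph{conditional} law of $X_T$ given $X_0$, and the map $\P\longmapsto\Lc^{\P|X_0}_{X_T}$ is not continuous under weak convergence of path laws, even when all laws share the initial marginal $\mu_0$ and are equivalent to Wiener measure via Girsanov. Concretely, a sequence $\P^n\in\Rc(\mu)$ whose drifts oscillate in $X_0$ can converge weakly while the disintegrations $x_0\mapsto\Lc^{\P^n|X_0=x_0}_{X_T}$ fail to converge to the disintegration of the limit. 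This breaks both of your key steps: the attainment of $\sup_{\P\in\Rc(\mu)}\E\big[U\big(\Lc^{\P|X_0}_{X_T}\big)\big]$ (you assert it follows from ``$U$ continuous on $\Pc_p$ plus tightness,'' but the objective is not a continuous function of $\P$ on $\Rc(\mu)$), and the closed-graph property of $\Phi$, which requires continuity of the value along $\mu^n\to\mu$, $\nu^n\to\nu$.

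The paper's proof is organized precisely around curing this discontinuity: instead of working with path laws on $\Cc$, it lifts to probability measures $\Pr$ on $\Rc=\R^d\times\Pc_p(\Cc\times\Cc\times\M(A))$, i.e.\ it makes the conditional law $\muh=\Lc^{|X_0}_{(\overline X,W,\mu)}$ itself a canonical variable. On that enlarged space the objective $\Jc(\Qr,\Pr)=\E^{\Qr}\big[U\big(\muh\circ\Xh_T^{-1}\big)\big]$ is the integral of a bounded continuous function of $(x_0,\muh)$ against $\Qr$, hence genuinely continuous in $\Wc_p$, and Berge's Maximum Theorem and Kakutani apply. The price is a second de-randomization step at the end (Gy\"ongy mimicking, measurable selection using Assumption ${\rm E}_p$(iii), and conditional Jensen with the concavity of $U$) to pass from a $\Pr$-randomized conditional law back to a strict control whose conditional law given $X_0$ is deterministic. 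Your proposal would be repaired by incorporating this lift; as written, the fixed-point scheme on $C([0,T];\Pc_p(\R^{2d}))$ does not carry enough information to control the conditional laws appearing in the cost.
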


The proof is reported in Section \ref{sect:MFGStatic}.

\section{Continuous time approximate Nash equilibria}

Let $(W^i)_{i \ge 1}$ be a sequence of independent standard Brownian motions, and let $(X_0^i)_{i \ge 1}$ be an i.i.d.\ sequence of random variables with common distribution $\Lc(X_0^i)=\Lc(X_0)$. For each $n \ge 1$, we denote by $\Ac_n$ the set of $A$--valued processes $(\alpha_t)_{t \in [0,T]}$ that are predictable with respect to the filtration $\F^n=\{\Fc^n_t\}_{t \in [0,T]}$ with
\[
    \Fc^n_t:=\sigma(X_0^i,W^i_{\wedge t}, i=1,\ldots,n),~~ t\in [0,T].
\]

Given a vector of controls $\alphab:=(\alpha^1,\ldots,\alpha^n) \in (\Ac_n)^n$, we consider the $n$--tuple of controlled states $\Xbb^{\alphab}:=(X^{1,\alphab},\ldots,X^{n,\alphab})$, where each component satisfies the dynamics: 
\begin{eqnarray*}
X^{1,\alphab}_0=X^i_0
~\mbox{and}~
    \mathrm{d}X^i_t = b (t,\underline{X}^i_t,\mu^n_t,\alpha^i_t)\mathrm{d}t + \mathrm{d}W^i_t, 
    &\mbox{with}&
    \mu^n_t := \frac{1}{n}\sum_{j=1}^n \delta_{\underline{X}^j_t},
    ~t \in [0,T].
\end{eqnarray*}
The maximal value that player $i$ can achieve by deviating unilaterally is
\begin{eqnarray*}
    V_i^n(X_0^i, \alphab^{-i})
    := \sup_{\alpha^i \in \Ac_n} J_i^n(X_0^i,\alphab),
    &\mbox{with}&
    J_i^n(X_0^i,\alphab)
    := U \Big(\Lc^{|X_0^i}_{X^{i,\alphab}_T} \Big).
\end{eqnarray*}
Similar to \eqref{Dn} in the static setting, we introduce the $\delta-$truncated maximum deviation gain for all $\delta\in(0,\infty]$: 
$$
    \overline{\Dbf}^\delta_n[\boldsymbol{\alpha}]
    :=
    \max_{i\le n}\Drm_n^i[\boldsymbol{\alpha}]\1_{\{X^i_0\in B(n^\delta)\}}
~\mbox{with}~
\Drm_n^i[\boldsymbol{\alpha}]
:=
V_i^n(X^i_0,\boldsymbol{\alpha}^{-i})-J_i^n(X^i_0,\boldsymbol{\alpha}),
~i=1,\ldots,n.
$$

\begin{definition}[Open--loop Nash equilibirum]

For fixed $n\ge 1$ and $\varepsilon>0$, a strategy $\boldsymbol{\alpha}\in(\Ac_n)^n$ is called
\\
{\rm (NE)} a Nash equilibrium for $n-$player game if $\overline{\Dbf}^\infty_n[\boldsymbol{\alpha}]=0$;
\\
{\rm (NE$_\varepsilon$)} an $\varepsilon-$Nash equilibrium for $n-$player game if $\overline{\Dbf}^\infty_n[\boldsymbol{\alpha}]\le\varepsilon$.\\
{\rm (NE$^\delta_\varepsilon$)} an $(\varepsilon,\delta)-$Nash equilibrium for $n-$player game if $\overline{\Dbf}^\delta_n[\boldsymbol{\alpha}]\le\varepsilon$.
\end{definition}

\medskip
Let $\widehat{\mu}$ be a solution of the mean field game (MFG) associated with the closed--loop control $\widehat{\alpha}\in \Ac$. As the drift coefficient $b$ is bounded, by Assumption ${\rm E}_p$ (i), it follows from \citeauthor*{AJVeretennikov_1981} \cite{AJVeretennikov_1981} that $X^{\hat\mu,\hat\alpha}$ is strong solution, i.e.
\begin{eqnarray}
X^{\hat\mu,\hat\alpha}_t=\Phi(t,X_0,W_{\wedge t}),
~~t\le T,
\end{eqnarray} 
for some progressively measurable map $\Phi:[0,T] \x \R^d\x C([0,T];\R^d) \longrightarrow \R^d$. 

For all $n \ge 1$, we define the strategy profile 
\begin{equation}\label{alphan:continuous}
    \alphab^n := (\alpha^{1,n},\ldots,\alpha^{n,n}) \in (\Ac_n)^n, 
    ~\mbox{where}~ \alpha^{i,n}_t := \widehat\alpha(t,X_0^i, \Phi (t, X^i_0,W^i_{t \wedge \cdot} ) ),
\end{equation}
and we consider the corresponding $n$--player system $(X^1,\ldots,X^n)$ started from $(X^i_0)_{1 \le i \le n}$ and defined by:
\begin{equation}\label{SDE:ANE}
    \mathrm{d}X^i_t 
    = b ( t,\underline{X}^i_t,\mu^n_t,\alpha^{i,n}_t )\mathrm{d}t 
    + \mathrm{d}W^i_t, 
    ~\mbox{with}~
    \mu^n_t := \frac{1}{n}\sum_{j=1}^n \delta_{\underline{X}^j_t},~
    t \in [0,T],~i=1,\ldots,n.
\end{equation}
Similar to \eqref{Gn}, we also introduce the gap to the MFE:
\begin{eqnarray*}
\overline{\Gbf}^\delta_n[x_0,\boldsymbol{\alpha}^n]
:=
\max_{i\le n} \big|J_i^n(x_0,\boldsymbol{\alpha}^n)
                            - V(x_0,\hat\mu) 
                     \big|\1_{\{x_0\in B(n^\delta)\}}.
\end{eqnarray*}

\begin{theorem}
Let Assumption ${\rm E}_p$ hold and let $\mu_0\in\Pc_q(\R^d)$, for some $q>p$. Then,  the strategy $\boldsymbol{\alpha}^{n}$ defined in \eqref{alphan:continuous} is an approximate Nash equilibrium in the following senses:

\medskip\medskip
\noindent \hspace{3mm}{{\rm (ANE$_{\L^{r}}$)}} \hspace{1mm} For all $r>1$, we have 
$\overline{\Dbf}^\infty_n[\boldsymbol{\alpha}^n]
\underset{n\to\infty}{\longrightarrow} 0
~\mbox{and}~\overline{\Gbf}^\infty_n[X_0,\boldsymbol{\alpha}^n]
\underset{n\to\infty}{\longrightarrow} 0~\mbox{in}~\L^r(\mu_0).
$

\medskip\medskip
\noindent \hspace{3mm}{{\rm (ANE$^\delta_{\L^\infty}$)}} \hspace{1mm} For all $\delta\in(0,\frac1p)$, with $\delta=\infty$ when $p=0$, we have 
$$
\overline{\Dbf}^\delta_n[\boldsymbol{\alpha}^n]
\underset{n\to\infty}{\longrightarrow} 0~\mbox{in}~\L^\infty(\mu_0)
~\mbox{and}~
\overline{\Gbf}^\delta_n[x_0,\boldsymbol{\alpha}^n]
\underset{n\to\infty}{\longrightarrow} 0~\mbox{in}~\L^\infty(\S).
$$
In particular, for all $\eps>0$, the strategy $\boldsymbol{\alpha}^n$ is an $(\eps,\delta)-$Nash equilibrium for sufficiently large population size $n$.
\end{theorem}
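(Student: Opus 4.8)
The proof follows the same four-step scheme as \Cref{thm:approx_nash_normal}, with the one-period fixed-point analysis replaced by a Gr\"onwall stability estimate for the interacting diffusions. As in the static case, (ANE$_{\L^r}$) is deduced from (ANE$^\delta_{\L^\infty}$): splitting $\overline{\Gbf}^\infty_n[X_0,\boldsymbol{\alpha}^n]$ according to whether $|X_0|\ge n^\delta$, the tail contribution is estimated by Cauchy--Schwarz using $\P[|X_0|\ge n^\delta]\to0$ (a consequence of $\mu_0\in\Pc_q$, $q>p$) together with a uniform $\L^{2r}$ bound, while the bulk is controlled by the $\L^\infty$ statement; the same reasoning applies to $\overline{\Dbf}^\infty_n$. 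To prove (ANE$^\delta_{\L^\infty}$), I introduce the decoupled particles $\hat X^i_t:=\Phi(t,X_0^i,W^i_{\wedge t})$, i.e.\ independent copies of $X^{\hat\mu,\hat\alpha}$ driven by $(X_0^i,W^i)$, and observe that the open-loop control $\alpha^{i,n}_t=\widehat\alpha(t,X^i_0,\hat X^i_t)$ is shared by the coupled and the decoupled dynamics.

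Since $b$ is Lipschitz in $(x,m)$ and the controls coincide, subtracting the two SDEs driven by the same $W^i$ and applying Gr\"onwall gives, for $D^n_t:=\frac1n\sum_i\sup_{u\le t}|X^i_u-\hat X^i_u|^p$, the bound $D^n_T\le C\int_0^T\Wc_p(\hat\mu^n_s,\hat\mu_s)^p\,\d s$, where $\hat\mu^n_s:=\frac1n\sum_j\delta_{\underline{\hat X}^j_s}$; the coupling term $\Wc_p(\mu^n_s,\hat\mu^n_s)^p\le\frac1n\sum_j|X^j_s-\hat X^j_s|^p$ is absorbed on the left, so the coupled fluctuations are dominated by the purely decoupled quantity $\Wc_p(\hat\mu^n_s,\hat\mu_s)$. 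Since the $\hat X^j$ are i.i.d.\ with finite $q$-th moment, $q>p$, a $\Wc_p$-law of large numbers gives $\Wc_p(\hat\mu^n_s,\hat\mu_s)\to0$, whence $\Wc_p(\mu^n_s,\hat\mu_s)\to0$ and the averaged propagation of chaos. Note that, unlike the static case, no contraction-constant condition is needed here: the finite-horizon Gr\"onwall factor $e^{CT}$ supplies the required stability for any Lipschitz bounded $b$.

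The decisive step is the uniform-in-state refinement. Fixing $\delta<1/p$ and a maximizing sequence $(i_n,x_0^n)$ with $|x_0^n|^p\le n^{1-\varepsilon}$, $\varepsilon:=1-\delta p>0$, I condition on $\{X^{i_n}_0=x_0^n\}$ and show $\E[\Wc_p(\hat\mu^n_s,\hat\mu_s)^p\mid X^{i_n}_0=x_0^n]\to0$ uniformly. Here the boundedness of $b$ is essential: it yields $|\hat X^{i_n}_s|\le|x_0^n|+\|b\|_\infty T+|W^{i_n}_s|$, so the single conditioned particle contributes at most $O(n^{-1}(1+|x_0^n|^p))=O(n^{-\varepsilon})$ to any $p$-polynomially growing empirical average, while the remaining $n-1$ particles are unconditioned i.i.d.\ and obey the $\Wc_p$-LLN. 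Passing this conditional convergence through the Gr\"onwall bound---whose right-hand side involves only the decoupled empirical measure---gives $\sup_{i\le n}\sup_{x_0\in B(n^\delta)}\E[\sup_{t\le T}|X^i_t-\hat X^i_t|^p\mid X^i_0=x_0]\to0$. Because the two trajectories share $W^i$, this bounds the conditional $\Wc_p$-distance between $\Lc^{|X^i_0=x_0}_{X^i_T}$ and $\Lc^{|X_0=x_0}_{X^{\hat\mu,\hat\alpha}_T}$, and continuity of $U$ (Assumption ${\rm E}_p$\,(i)) together with $U(\Lc^{|X_0=x_0}_{X^{\hat\mu,\hat\alpha}_T})=V(x_0,\hat\mu)$ (from (FP)--(IO)) yields $\overline{\Gbf}^\delta_n[x_0,\boldsymbol{\alpha}^n]\to0$ in $\L^\infty$.

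For $\overline{\Dbf}^\delta_n$, I run the same decoupling for a deviating profile $\alpha^{n,(-i)}\oplus\beta$: a single deviator perturbs every other drift only through the $O(1/n)$ empirical weight, so the Gr\"onwall/single-particle estimate again gives $\E[\Wc_p(\mu^{n,\beta}_s,\hat\mu_s)^p\mid X^{i_n}_0=x_0^n]\to0$ for the deviated empirical measure, and hence the deviator's terminal conditional law converges to that of $X^{\hat\mu,\beta}$ against the frozen environment $\hat\mu$. The genuinely new point, exactly as in the static proof, is a measurability issue: $\beta\in\Ac_n$ may exploit the whole filtration $\Fc^n_t=\sigma(X^j_0,W^j_{\wedge t}:j\le n)$, whereas $V(x_0,\hat\mu)$ is a supremum over representative-agent controls. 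Following the static argument, I invoke a Blackwell--Dubins transfer to realize, conditionally on $\{X^{i_n}_0=x_0^n\}$, the extraneous data $(X^j_0,W^j)_{j\ne i_n}$ from an independent source; since the limiting environment $\hat\mu$ is deterministic, this extra randomization cannot exceed $V(x_0^n,\hat\mu)$, so $\limsup_n J^n_{i_n}(x_0^n,\alpha^{n,(-i_n)}\oplus\beta)\le V(x_0^n,\hat\mu)$. Combined with $J^n_{i_n}(x_0^n,\boldsymbol{\alpha}^n)\to V(x_0^n,\hat\mu)$ from the previous step, this forces $\sup_{i\le n}\|\Drm^i_n[\boldsymbol{\alpha}^n]\1_{\{X^i_0\in B(n^\delta)\}}\|_{\L^\infty(\mu_0)}\to0$. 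I expect the uniform conditional estimate and the correct handling of the open-loop information in the Blackwell--Dubins step to be the main obstacles; it is precisely the bounded-drift assumption that makes the single-particle contribution uniformly negligible and, via Gr\"onwall over $[0,T]$, removes the need for the contraction-constant condition used in the static $\Wc_p$ analysis.
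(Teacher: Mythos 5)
Your proposal is correct and follows essentially the same route as the paper: reduction of (ANE$_{\L^r}$) to (ANE$^\delta_{\L^\infty}$) by splitting on $\{|X_0|\ge n^\delta\}$; introduction of the decoupled i.i.d.\ system $\Xb^i$ driven by the frozen flow $\widehat\mu$ and sharing the control $\alpha^{i,n}_t=\widehat\alpha(t,X^i_0,\Phi(t,X^i_0,W^i_{\wedge t}))$; a Gr\"onwall estimate absorbing the coupled fluctuation and leaving only $\Wc_p(\mub^n_s,\widehat\mu_s)$; the uniform-in-state refinement via a maximizing sequence $(i_n,x^n_0)$ with $|x^n_0|^p\le n^{1-\varepsilon}$, where boundedness of $b$ makes the single conditioned particle's contribution $O(n^{-\varepsilon})$; and the $O(1/n)$ perturbation analysis for a single deviator. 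You also correctly observe that the finite-horizon Gr\"onwall factor replaces the contraction-constant condition of the static case.

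The one point where you diverge is the final information step. You assert that, after a Blackwell--Dubins transfer of the extraneous data $(X^j_0,W^j)_{j\ne i_n}$, ``this extra randomization cannot exceed $V(x^n_0,\widehat\mu)$'' because $\widehat\mu$ is deterministic. In continuous time this is not a triviality: a control in $\Ac_n$ is adapted to the enlarged filtration $\Fc^n_t$, and the claim that enlarging the filtration by independent noise does not increase the value of the representative-agent problem is precisely the equivalence between the weak and strong open-loop formulations, and between open-loop and closed-loop formulations, which the paper obtains by citing \cite[Theorem~7.3]{el1987compactification}. Your Blackwell--Dubins device handles the conditioning on $\{X^{i_n}_0=x^n_0\}$ but does not by itself deliver this equivalence of values; you should either invoke the compactification result directly (as the paper does) or supply a mimicking/relaxed-control argument in its place. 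With that substitution the proof is complete.
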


\begin{proof}
First, {\rm (ANE$_{\L^{r}}$)} is a direct consequence  of {\rm (ANE$^\delta_{\L^{\infty}}$)}. We omit this part of the proof as it follows by following the same arguments as in the static setting, and we now focus on the proof of {\rm (ANE$^\delta_{\L^{\infty}}$)}.

\medskip

\noindent\textit{Step 1: Convergence of Nash equilibria.}  
Let $(\Xb^i)_{i\ge1}$ be the processes defined by the initial conditions $\Xb^i_0 = X^i_0$ and by replacing the empirical measure $\mu^n$ in the controlled SDE \eqref{SDE:ANE} by the deterministic MFG measure flow $\widehat\mu$:
\begin{align*}
    \mathrm{d}\Xb^i_t
    &= b\bigl(t,\Xb^i_t,\widehat\mu_t,\alpha^{i,n}_t \bigr)\mathrm{d}t 
       + \mathrm{d}W^i_t, 
       \qquad t\in[0,T].
\end{align*}
By construction, $(\Xb^i)_{i\ge1}$ forms a sequence of i.i.d.\ copies of the mean--field state process $X^{\hat\mu,\hat\alpha}$. Then the corresponding sequence of empirical measures $\mub^n := \frac1n \sum_{i=1}^n \delta_{\Xb^i}$ converge by the law of large numbers, together with uniform integrability:
\begin{align} \label{eq:poc}
    \lim_{n\to\infty}
    \Wc_p\bigl(\mub^n,\,\Lc_{X^{\hat\mu,\hat\alpha}}\bigr)
    = 0.
\end{align}
For later use, we also introduce the notation $\mub^n_t := \frac1n \sum_{i=1}^n \delta_{\Xb^i_t}$.
Let $\delta\in(0,\frac1p)$, and choose any sequence $(i_n,x_0^n)_{n\ge1} \subset \{1,\dots,n\}\times B(n^\delta)$ satisfying
\begin{equation} \label{eq:poc_x_0_ineq}
    \sup_{i\le n}
    \sup_{x\in B(n^\delta)}
    w_n(i,x)
    - 2^{-n}
    \le 
    w_n(i_n,x_n),
    ~\mbox{where}~
    w_n(i,x)
    :=
    \Wc_p\Bigl(
        \Lc^{|X_0^i=x}_{X^{i,\hat\alphab}_T},
        \Lc^{|X_0=x}_{X^{\mu,\hat\alpha}_T}
    \Bigr),
\end{equation}
and let us prove that
\begin{align} \label{eq:poc_x_0}
    \lim_{n\to\infty}
    w_n(i_n,x_n)
    = 0,
\end{align}
which implies by the continuity of the reward functional $U$ that
\[
\big\|\overline{\Gbf}^\delta_n[x_0,\boldsymbol{\alpha}^n]
\big\|_{\L^\infty(\S)}
=
\max_{i\le n}
    \sup_{x_0\in B(n^\delta)}
    \Bigl|
        J_i^n(x_0,\alphab^n)
        -
        U\Big(
            \Lc^{|X_0=x_0}_{X^{\hat\mu,\hat\alpha}_T}
        \Big)
    \Bigr|
\underset{n\to\infty}{\longrightarrow} 0.
\]
To see that \eqref{eq:poc_x_0_ineq} holds, we obtain by standard SDE estimates using theLipschitz continuity of $b$ in $(x,m)$ that
\begin{align*}
\E\bigl[
        \Wc_p(\mu^n_t,\mub^n_t)
        \,\big|\, X^{i_n}_0=x_0^n
    \bigr]&\le
    \E\Bigl[
        \sup_{s\le t}\!
        |X^{j,\alphab}_s - \Xb^j_s|
        \,\Big|\, X^{i_n}_0 = x_0^n
    \Bigr]
    \\
    &\le
    C\!\int_0^t\!
        \E\bigl[
            \Wc_p(\mu^n_s,\widehat\mu_s)\,
            \big|\, X^{i_n}_0 = x_0^n
        \bigr]
    \mathrm{d}s
    \\
    &\le
    C\!\int_0^t\!
        \E\bigl[
            \Wc_p(\mu^n_s,\mub^n_s)
            +\Wc_p(\mub^n_s,\widehat\mu_s)
            \,\big|\, X^{i_n}_0=x_0^n
        \bigr]
    \mathrm{d}s,
\end{align*}
By Gronwall's lemma and direct manipulation, this provides
\begin{align} \label{eq:unif_cong_cont}
    \E\!\left[
        \sup_{s\le T}
        |X^{j,\alphab}_s - \Xb^j_s|
        \,\Big|\, X^{i_n}_0 = x_0^n
    \right]
    &\le
    K\!\int_0^T\!
        \E\bigl[
            \Wc_p(\mub^n_s,\widehat\mu_s)
            \,\big|\, X^{i_n}_0=x_0^n
        \bigr]
    \mathrm{d}s 
    \nonumber \\
    &\le
    KT\E\bigl[\Wc_p(\mub^n,\Lc_{X^{\hat\mu,\hat\alpha}})
        \big| X^{i_n}_0=x_0^n
    \bigr],
\end{align}
for some $K>0$ independent of $n$.

As $b$ is bounded, we have $
    \sup_{t\le T} |\Xb^{i_n}_t - W^{i_n}_t|^p
    \le C(|x^n_0|^p + 1)\le C(n^{1-\eps} + 1),$ a.s. and we estimate for all continuous function $\varphi$ with $p$--polynomial growth that
    \[
    \bigl|
        \langle\varphi,\mub^n\rangle
        - \tfrac1n\!\sum_{j\ne i_n}\!\varphi(\Xb^j,X^j_0)
    \bigr|
    =
    \frac1n\,|\varphi(\Xb^{i_n},x_0^n)|
    \le
    \frac{C}{n}\Bigl(
        \sup_{t\le T}|W^{i_n}_t|^p + n^{1-\eps} + 1
    \Bigr)
    \underset{n\to\infty}{\longrightarrow} 0
\]
As $(\Xb^j)_{j\ge1}$ are iid, this implies by the law of large numbers and dominated convergence imply that $\lim_{n\to\infty}
    \E[|
            \langle\varphi,\mub^n\rangle
            - \langle\varphi,\Lc_{X^{\hat\mu,\hat\alpha}}\rangle
        | |X^{i_n}_0=x_0^n]
    = 0$, and by the arbitrariness of $\varphi$, 
\[
    \lim_{n\to\infty}
    \E\bigl[
        \Wc_p(\mub^n,\Lc_{X^{\hat\mu,\hat\alpha}})
        \,\big|\, X^{i_n}_0=x_0^n
    \bigr]
    = 0 .
\]
Substituting in \eqref{eq:unif_cong_cont} and the definition of $w_n$ in \eqref{eq:poc_x_0_ineq} that:
\begin{equation} \label{eq:opti_conv}
\begin{array}{c}
\displaystyle
    \lim_{n\to\infty}
    \E\Big[
        \sup_{t\le T}
        |X^{i_n,\alphab}_t - \Xb^{i_n}_t|
        \Big| X^{i_n}_0=x_0^n
    \Big]
    = 0,
\\
\displaystyle\mbox{and}~
\lim_{n\to\infty}
    \sup_{i\le n}
    \sup_{x\in B(n^\delta)}
    w_n(i,x)
    =
    \lim_{n\to\infty}w_n(i_n,x_0^n)
=0.
\end{array}
\end{equation}

\medskip
\noindent\textit{Step 2: Optimality condition.}  
Let $(\beta^{n},i_n,x_0^n)\in(\Ac_n)^n\times\{1,\ldots,n\}\times B(n^\delta)$ be a $2^{-n}$--maximizer of $\gamma^n
    :=
    \big\| \overline{\Dbf}_n^\delta[\boldsymbol{\alpha}^n]
    \big\|_{\L^\infty(\mu_0)}$:
\begin{align}\label{eq:deltain}
    \gamma^n - 2^{-n}
    \;\le\;
    J_{i_n}^n\bigl(x_0^n,\alphab^{n,(-i_n)}\!\oplus\beta^n\bigr)
    -
    J_{i_n}^n(x_0^n,\alphab^n).
\end{align}
We consider the situation when Player $i$ deviates from the control strategy $\alpha^{i,n}$ to $\beta^n$, while all other players keep their strategies fixed. Denote 
\[
    X^{j,i}
    :=
    X^{j,\,\boldsymbol{\alpha}^{n,(-i)}\oplus\beta^{i,n}},
    ~~
    \mu^{n,i}_t := \frac1n \sum_{k=1}^n \delta_{X^{k,i}_t},
    ~\mbox{and}~
    \mub^n_t := \frac1n \sum_{k=1}^n \delta_{\Xb^k_t},
\]
where $(\Xb^j)_{j\ge1}$ is, as before, the auxiliary mean-field system associated with $(\widehat\mu,\widehat\alpha)$.

By the Lipschitz continuity of $b$ in $(x,m)$, standard SDE stability estimates ensure the existence of a constant 
$C>0$, independent of $n$, such that for any $j\neq i_n$:
\begin{align}
    \E\!\left[
        \sup_{s\le t}
        |X^{j,i_n}_s - \Xb^j_s|
        \;\middle|\;
        X^{i_n}_0 = x_0^n
    \right]
    &\le
    C\!\int_0^t\!
        \E\!\left[
            \Wc_p(\mu^{n,i_n}_s,\widehat\mu_s)
            \;\middle|\;
            X^{i_n}_0 = x_0^n
        \right]\mathrm{d}s
    \nonumber\\
    &\hspace{-40mm}\le
    C\!\int_0^t\!\!
        \Big(
            \E\!\left[
                \Wc_p(\mu^{n,i_n}_s,\mub^n_s)
                \,\middle|\, X^{i_n}_0 = x_0^n
            \right]
            +
            \E\!\left[
                \Wc_p(\mub^n_s,\widehat\mu_s)
                \,\middle|\, X^{i_n}_0 = x_0^n
            \right]
        \Big)
    \mathrm{d}s .
    \label{stabjneqi}
\end{align}
Then
\begin{align*}
\E\!\big[ \Wc_p(\mu^{n,i_n}_t,\mub^n_t)
             \big| X^{i_n}_0 = x_0^n
    \big]
&\le \E\!\Big[ \frac1n \sum_{j\neq i_n} \sup_{s\le t}|X^{j,i_n}_s - \Xb^j_s|
                     + \frac1n\sup_{s\le t}|X^{i_n,i_n}_s - \Xb^{i_n}_s|
                    \big| X^{i_n}_0 = x_0^n
        \Big]
      \\
    &\le
    C\!\int_0^t\!\big(
        \E\!\big[
            \Wc_p(\mu^{n,i_n}_s,\mub^n_s)
            \big| X^{i_n}_0 = x_0^n
        \big]+
        \E\!\big[
            \Wc_p(\mub^n_s,\widehat\mu_s)
            \big| X^{i_n}_0 = x_0^n
        \big]\big)\mathrm{d}s
    \\
    &\quad +
    \frac1n
    \E\!\Big[
        \sup_{s\le t}
        |X^{i_n,i_n}_s - \Xb^{i_n}_s|
        \big| X^{i_n}_0 = x_0^n
    \Big].
\end{align*}
Applying Gronwall’s lemma to the above inequality yields
\begin{align*}
    \E\!\left[
        \Wc_p(\mu^{n,i_n}_t,\mub^n_t)
        \,\middle|\, X^{i_n}_0 = x_0^n
    \right]
    &\le
    C\!\int_0^t\!
        \E\!\left[
            \Wc_p(\mub^n_s,\widehat\mu_s)
            \,\middle|\, X^{i_n}_0 = x_0^n
        \right]\mathrm{d}s
    \\
    &\quad +
    \frac{C}{n}
    \E\!\left[
        \sup_{s\le T}
        |X^{i_n,i_n}_s - \Xb^{i_n}_s|
        \,\middle|\, X^{i_n}_0 = x_0^n
    \right],
\end{align*}
and injecting this bound in \eqref{stabjneqi}, we obtain
\begin{align*}
    \sup_{j\neq i_n}
    \E\!\left[
        \sup_{s\le T}
        |X^{j,i_n}_s - \Xb^j_s|
        \,\middle|\,
        X^{i_n}_0 = x_0^n
    \right]
    &\le
    C\!\int_0^T\!
        \E\!\left[
            \Wc_p(\mub^n_s,\widehat\mu_s)
            \,\middle|\, X^{i_n}_0 = x_0^n
        \right]\mathrm{d}s
    \\
    &\quad +
    \frac{C}{n}
    \E\!\left[
        \sup_{s\le T}
        |X^{i_n,i_n}_s - \Xb^{i_n}_s|
        \,\middle|\, X^{i_n}_0 = x_0^n
    \right].
\end{align*}
As for the deviating player, we get from Step~1 together with the boundedness of $b$ that
\[
    \lim_{n\to\infty}
    \frac1n
    \E\!\Big[
        \sup_{t\le T}
        |X^{i_n,i_n}_t - \Xb^{i_n}_t|
        \big| X^{i_n}_0 = x_0^n
    \Big]
    = 0 .
\]
Combining the preceding estimates with the convergence established in Step~1, we finally deduce that
\begin{align} \label{eq:poc_deviate}
    \lim_{n\to\infty}
    \sup_{t_0\in[0,T]}
    \E\!\big[
        \Wc_p(\mu^{n,i_n}_{t_0},\widehat\mu_{t_0})
        \big| X^{i_n}_0 = x_0^n
    \big]
    = 0,~
    \lim_{n\to\infty}
    \sup_{j\neq i_n}
    \E\!\big[
        \sup_{t\le T}
        |X^{j,i_n}_t - \Xb^j_t|
        \big| X^{i_n}_0 = x_0^n
    \big]
    = 0 .
\end{align}
Let $\Xb^{i,\beta^n}$ be the processes defined by:  
 \begin{align*}
 \Xb^{i,\beta^n}_0=X^i_0~\mbox{and}~
    \mathrm{d}\Xb^{i,\beta^n}_t 
    &= b( t,\Xb^{i,\beta^n}_t,\widehat\mu_t,\beta^{n}_t)\,\mathrm{d}t 
    + \mathrm{d}W^i_t, 
    \quad t \in [0,T],
\end{align*}
By \eqref{eq:opti_conv} and \eqref{eq:poc_deviate}, it follows that,
\begin{align*}
    \lim_{n \to \infty}\Wc_p \left( \Lc^{|X^i_0=x^{n}_0}_{X^{i_n,i_n}_T},\, \Lc^{|X^{i_n}_0=x^{n}_0}_{\Xb^{i_n,\,\beta^{n}}_T} \right)=0\quad \mbox{and}\quad \lim_{n \to \infty}\Wc_p \left( \Lc^{|X^{i_n}_0=x^{n}_0}_{X^{i_n,\alphab}_T},\, \Lc^{|X^{i_n}_0=x^{n}_0}_{\Xb^{i_n,\,\alpha^{i_n,n}}_T} \right)=0.
\end{align*}
We deduce that
\begin{align}
0 \le
    \limsup_{n \to \infty}\gamma^n  
    &\le \limsup_{n \to \infty} J_{i_n}^n\big(x_0^{n},\alphab^{n,(-i_n)}\oplus\beta^{n}\big)
           - J_{i_n}^n(x_0^{n},\alphab^n) 
    \nonumber\\
    & \le   \limsup_{n \to \infty}U \left( \Lc^{|X^{i_n}_0=x^{n}_0}_{\Xb^{i_n,\,\beta^{n}}_T} \right) 
                                             - U \left(  \Lc^{|X^{i_n}_0=x^{n}_0}_{\Xb^{i_n,\,\alpha^{i_n,n}}_T} \right)
   \nonumber\\ &
     =  \limsup_{n \to \infty}U \left( \Lc^{|X^{i_n}_0=x^{n}_0}_{\Xb^{i_n,\,\beta^{n}}_T} \right) 
                                             - U \left(  \Lc^{|X^{i_n}_0=x^{n}_0}_{\Xb^{\hat\mu,\hat\alpha}_T} \right),
 \label{deltanto0:continuous}
\end{align}
as $\Lc^{|X^i_0=x^{i,n}_0}_{\Xb^{i,\,\alpha^{i,n}}_T}=\Lc^{|X_0=x^{i,n}_0}_{\Xb^{\mu,\alpha}_T}$. Moreover, by using the equivalence between weak and strong open--loop formulation, and strong open--loop and  closed--loop formulation in \citeauthor*{el1987compactification} \cite[Theorem 7.3]{el1987compactification}, it follows that
$$
    U \left( \Lc^{|X^{i_n}_0=x^{n}_0}_{\Xb^{i_n,\,\beta^{n}}_T} \right)
    \le
    \sup_{\beta \in \Ac_n}U \left( \Lc^{|X^{i_n}_0=x^{n}_0}_{\Xb^{i_n,\,\beta}_T} \right) 
    =  \sup_{\beta \in \Ac}U \left( \Lc^{|X_0=x^{n}_0}_{\Xb^{\hat\mu,\beta}_T} \right).
$$
Since $\hat\mu$ is a MFG solution associated to $\hat\alpha$, we deduce from \eqref{deltanto0:continuous} that $\lim_{n \to \infty}\delta^n = 0$, thus completing the proof.
\end{proof}

\section{Existence of one-period MFG solutions}
\label{sect:MFGStatic}

{\bf Proof of Proposition \ref{prop_continuous} under Condition (MFG1)} Let us recall that $\mu_0 \in \Pc_q(\S)$ with $q > p \ge 1$.
We first show that $\mu \mapsto J(\mu,\widehat{a})$ is a continuous map from $\Pc_p(\S \x A)\longrightarrow\R$. This is clearly implied by the continuity of  the map $T_0:\Pc_p(\mu_0)\longrightarrow\L^p(\Pc_p(\S),\mu_0)$:  
    $$
        T_0(\mu)=\Lc_{X^{\mu,\hat{a}}_1}^{| X_0},
        ~\mbox{for all}~\mu\in\Pc_p(\mu_0):=\{ \mu \in \Pc_p(\X \x A): \mu\circ X_0^{-1}=\mu_0 \},
    $$
which we now verify. Let $(\mu^n)_{n \ge 1}$ be a sequence converging to $\mu$ in $\Wc_p$. By the continuity condition on $F$ in Assumption ${\rm H}_p$ and the continuity of the optimal response map $\widehat{a}$, we see that 
    \begin{align*}
        \Lc_{X^{\mu^n,\hat{a}}_1}^{| X_0}
        =
        \Lc_{F_{X_{\!0}}^\eta\!(\mu^n, \hat{a}(\mu^n,\Xb_0))}^{| X_0} 
        \underset{n\to\infty}{\longrightarrow}
        \Lc_{F_{X_{\!0}}^\eta\!(\mu, \hat{a}(\mu,\Xb_0))}^{| X_0},
        ~\mbox{a.s. in }\Wc_p.
    \end{align*}
Again by Assumption ${\rm H}_p$, $F$ has affine growth in $\eta$ locally uniformly in $(\mu,a)$. {\color{violet}As $\E[\eta^p]< \infty$,} this implies that $\sup_{n} \E\big[ |X^{\mu^n,\hat{a}}_1|^p\big| X_0\big] < \infty$, a.s. Then, we obtain by the dominated convergence Theorem that
    \begin{align*}
        \lim_{n \to \infty} \E \left[ \Wc_p \big( \Lc_{X^{\mu^n,\hat{a}}_1}^{| X_0}, \Lc_{X^{\mu,\hat{a}}_1}^{| X_0} \big)^p \right]=0,
    \end{align*}
thus justifying the continuity of $T_0$ in $\Pc_p(\mu_0)$. 

\medskip
Now notice that the map $T: \Pc_p(\mu_0) \longrightarrow \Pc_p(\mu_0)$ defined by $T(\mu)= \Lc_{\left(X^{\mu,\hat{a}},\, \hat a (\mu,\Xb_0) \right)}$, inherits the continuity of $T_0$ and $\widehat{a}$. As $\mu_0\in\Pc_q$ with $q>p$ and $\rho_A$ is bounded, it follows that the image $T \left(\Pc_p(\mu_0) \right)$ of $\Pc_p(\mu_0)$ by the application $T$ is a compact subset of the non--empty convex set $\Pc_p(\X \x A)$ and we then deduce from the Schauder fixed point Theorem the existence of a fixed point $\muh$ for $T$. Clearly $\muh$ is a solution of the MFG.
\ep

\medskip\medskip
\noindent {\bf Proof of Proposition \ref{prop_CSahat}} Notice that, the map $\widehat{a}$ satisfies 
    \begin{align*}
        U \big( \Lc_{X^{\hat\mu,\alpha}_1}^{| X_0} \big) 
        \le 
        U \big( \Lc_{X^{\hat\mu,\hat{a}}_1}^{| X_0} \big),
        ~\mbox{a.s. for all}~
        \alpha\in\Ac.
    \end{align*}
    Let $(\mu^n)_{n \ge 1} \subset \Pc_p$ be a sequence of probability measures converging to some $\mu \in \Pc_p$. From our assumption, the sequence $\left(\Lc _{\widehat{a}(\mu^n,x_0,\xi)} \right)_{n \ge 1}$ is  relatively compact in $A$ for a.e. $x_0 \in \S$. There exists a Borel measurable map ${\rm a} (\overline{x}_0): [0,1] \to A$ and a subsequence $(\bar\mu^k:=\mu^{n^{x_0}_k})_{k \ge 1}$ s.t. $\lim_{k \to \infty} \Lc_{\widehat{a}(\bar\mu^k,x_0,\xi)}= \Lc_{\bar{\alpha}(x_0)(\xi)}$ for a.e. $x_0\in\S$. Given our assumptions on $F$, this implies that
    \begin{align*}
        \Lc_{X^{\bar\mu^k,\hat{a}}_1}^{| X_0=x_0}
        = 
        \Lc_{F^\eta_{x_0}(\bar\mu^k, \hat{a}(\bar\mu^k,x_0,\xi))} 
        \longrightarrow 
        \Lc_{F_{x_0}^\eta\left(\mu, {\rm a} (x_0)(\xi) \right)},
        ~\mbox{as}~k \to \infty.
    \end{align*}
    Therefore, for any $\alpha$,
    \begin{align*}
        U \big( \Lc_{X^{\mu,\alpha}_1}^{| X_0=x_0} \big) 
        = \lim_{k \to \infty} U \big( \Lc_{X^{\bar\mu^k,\alpha}_1}^{| X_0=x_0} \big) 
        \le \lim_{k \to \infty}  U \big( \Lc_{X^{\bar\mu^k,\widehat{a}}_1}^{| X_0=x_0} \big)
        = U \big( \Lc_{F^\eta_{x_0}(\mu, {\rm a} (x_0) (\xi))} \big).
    \end{align*}
By the uniqueness of $\widehat{a}$, we deduce that ${\rm a} (x_0)(s)=\widehat{a}(\mu,x_0,s)$ for $\mu_0 \otimes {\rm U}_{[0,1]}-$a.e. $(x_0,s)\in\S \x [0,1]$. We have proved that any subsequence of  $\left( \Lc_{\hat{a}(\mu^n,x_0,\xi)} \right)_{n \ge 1}$ converges to $\Lc_{\hat{a}(\mu,x_0,\xi)}$, and then we have the convergence of the sequence $\left( \Lc_{\hat{a}(\mu^n,x_0,\xi)}\right)_{n \ge 1}$ towards $\Lc_{\hat{a}(\mu,x_0,\xi)}$ for $\mu_0-$a.e. $x_0\in\S$. Finally, we have
\begin{align*}
\lim_{n \to \infty}\Lc_{X^{\mu^{n},\hat{a}}_1}^{| X_0=x_0}
=  \Lc_{F_{x_0}^\eta(\mu, \hat{a}(\mu,x_0,\xi))} 
= \Lc_{X_1^{\mu,\hat{a}}}^{|X_0=x_0},
\end{align*}
which in turn implies the continuity of $\mu \mapsto J(\mu,\widehat{a})$. 
\ep

\begin{remark}{\rm
Assume further that $F$ is jointly continuous in $(x_0,\mu,a)$. Then, by following the same line of argument as in the proof of (ii), we see that, for all $\mu$, the set of continuity points of the optimal response map  $\widehat{a}(\mu,\cdot)$ has  full $\mu_0-$measure.}
\end{remark}

\medskip
\noindent {\bf Proof of Proposition \ref{prop_continuous} under Condition (MFG2)} We handle here the case where an optimal response map $\widehat{\alpha}$ exists but is not necessarily unique. The standard method to address this issue is to introducing the following relaxed formulation of the representative agent problem. Let $(X_0,M)$ the canonical variable on $\S \x \Pc(A)$ and
$$
\bar{\Ac}:=\left\{\pi\in\Pc(\S\x \Pc(A)):\pi\circ X_0^{-1}=\mu_0\;\right\}.
$$
Given a U$_{[0,1]}$ r.v. $\xi$ is   independent of $\eta$, we introduce the relaxed dynamics
$$
 \overline{X}_1^{\mu,M}:=F_{X_0}^\eta\left(\mu,\overline{M}(M,\xi) \right),
$$
where $\overline{M}:\Pc(A) \x [0,1] \to A$ is a Borel map s.t. $m=\Lc_{\overline{M}(m,\xi)}$ for all $m \in \Pc(A)$, and may be chosen continuous, see e.g. \citeauthor*{blackwellDubins83}  \cite{blackwellDubins83}. We denote 
$$
\varphi(\mu,x_0,m):=\Lc_{F^\eta_{x_0}(\mu,\overline{M}(m,\xi))}
~~\mbox{so that}~~
\varphi(\mu,X_0,M) = \Lc_{\overline{X}_1^{\mu,M}}^{|(X_0,M)},~\mbox{a.e.}
$$
and we introduce the relaxed individual optimization problem
\begin{equation}
\overline{V}(\mu)
:=
\sup_{\pi\in\bar\Ac}\overline{J}(\mu,\pi)
:=
\E^\pi\big[U\big(\varphi(\mu,X_0,M)\big)\big].
\end{equation}
We say that $\hat\mu \in \Pc(\X \x A)$ is a solution of the relaxed MFG associated to $\hat\pi\in\bar\Ac$ if:

\medskip
\noindent ${\rm (}\overline{{\rm IO}}{\rm )}$ Given $\hat\mu$, the relaxed strategy $\hat\pi$ solves the individual optimality $\bar J(\hat\mu,\hat\pi)=\bar V(\hat\mu)$, 

\medskip
\noindent ${\rm (}\overline{{\rm FP}}{\rm )}$ and $\hat\mu$ satisfies the fixed point condition 
$$
    \E^{\hat \pi} \left[ \varphi(\hat \mu,X_0,M)(\mathrm{d}x_1) \delta_{(X_0,\;\overline{M}(M,\xi))}(\mathrm{d}x_0,\mathrm{d}a)\right]=\hat\mu(\mathrm{d}x_0,\mathrm{d}x_1,\mathrm{d}a).
$$
\noindent {\bf 1.} Due to the singularity issue of the conditional expectation involved in the value function, we need to introduce one further relaxation the triplet $(X_0,\Lc_{\overline X_1}^{|X_0},\alpha)$ which takes values in the canonical space $\Rc:=\S\x\Pc_p(\S)\x A$. 

\medskip
\noindent {\bf 1.1.} For $\Gamma\in\Pc_p(\Rc)$, the $(X_0,X_1,{\rm a})-$marginal is denoted by 
$$
\mu^\Gamma(\mathrm{d}x_0,\mathrm{d}x_1, \mathrm{d}a)
:=
\E^\Gamma[\phi (\mathrm{d}x_1)\delta_{\left(X_0, \;{\rm a}  \right)}(\mathrm{d}x_0,\mathrm{d}a)],
$$
and we denote an induced regular version of the conditional law by:
$$
\varphi_1(x_0,\Gamma,m):= \varphi(\mu^\Gamma,x_0,m).
$$
We now introduce the set--valued map in $\Phi:\Pc_p(\Rc)\longrightarrow\!\!\!\!\!\to\Pc_p(\Rc)$: 
$$
\Gamma\longmapsto\Phi(\Gamma)
:=
\big\{\pi\circ\big(X_0,\varphi_1(X_0,\Gamma,M),\overline{M}(M,\xi)\big)^{-1}: \pi\in\overline{\Ac}
\big\},
$$
and we observe that the relaxed mean field game problem can be expressed as
\begin{align*}
\overline{V}(\mu^\Gamma)
=
\Vc(\Gamma)
:=
\sup_{\gamma\in \Phi(\Gamma)} \Jc(\Gamma,\gamma),
~\mbox{where}~
\Jc(\Gamma,\gamma)
&:= 
\int_{\S \x \Pc(A)} U \big( \varphi_1(x_0,\Gamma,a) \big) 
                        \pi(\mathrm{d}x_0, \mathrm{d}a) 
\\
& = \int_{\Rc} U (m) \gamma(\mathrm{d}x_0,\mathrm{d}m,\mathrm{d}a).
\end{align*}
Moreover, a relaxed solution of the MFG is now converted into a probability measure 
\begin{equation}\label{MFGrelax2}
\widehat\Gamma\in\Phi(\widehat\Gamma)
~~\mbox{such that}~~
\Vc(\widehat\Gamma)=\Jc(\widehat\Gamma,\widehat\Gamma).
\end{equation}
{\bf 1.2.} We next show that the set--valued map $\Phi$ is continuous: for an arbitrary sequence $(\Gamma^n)_{n \ge 1}\subset\Pc_p(\Rc)$ converging in $\Wc_p$ to $\Gamma\in\Pc_p(\Rc)$,  
\begin{itemize}
\vspace{-2mm}\item defining $\gamma_n:=\pi\circ(X_0,\varphi_1(X_0,\Gamma_n,M),\overline{M}(M,\xi))^{-1}\in\Phi(\Gamma_n)$ for some fixed $\pi\in\widehat\Ac$, it follows from our continuity assumptions on $F$ that 
$$
    \gamma_n\longrightarrow\gamma:=\pi\circ(X_0,\varphi_1(X_0,\Gamma,M),\overline{M}(M,\xi))^{-1}\in\Phi(\Gamma)\,\mbox{in}\,\Wc_p,
$$ 
thus proving the upper hemicontinuity of $\Phi$;
\vspace{-2mm}\item and any $\gamma:=\pi\circ(X_0,\varphi_1(X_0,\Gamma,M),\overline{M}(M,\xi))^{-1}\in\Phi(\Gamma)$ corresponding to some $\pi\in\widehat\Ac$ is the limit in $\Wc_p$ of the above sequence $(\gamma_n)_{n\ge 1}$, thus proving the lower hemicontinuity of $\Phi$.
\end{itemize}  
We finally show that $\Jc$ is a continuous map from $\Gc r(\Phi)\longrightarrow\R$, where $\Gc r(\Phi)$ denotes the graph of $\Phi$. To see this, let $(\Gamma_n)_{n\ge 1}$ and $\gamma_n\in\Phi(\Gamma_n)$, with corresponding $\pi_n\in\widehat{\Ac}$, be sequences converging in $\Wc_p$ to $\Gamma$ and $\gamma$, respectively. Then
$$
\Jc(\Gamma_n,\gamma_n)
=
\int_{\Rc} U (m) \gamma_n (\mathrm{d}x_0,\mathrm{d}m,\mathrm{d}a)
\longrightarrow
\int_{\Rc} U (m) \gamma (\mathrm{d}x_0,\mathrm{d}m,\mathrm{d}a) 
=\Jc(\Gamma,\gamma),
$$
by the continuity of $U$ in $\Wc_p$.

\medskip\noindent
{\bf 1.3.} Consider the ``argmax$"$ correspondence $\Gamma\longrightarrow\!\!\!\!\to\widehat\Phi(\Gamma):=\{\gamma\in\Phi(\Gamma):\Jc(\Gamma,\gamma)=\Vc(\Gamma)\}$ which is obviously convex. Using the growth conditions on $F$ and the fact that $\mu_0 \in \Pc_q(\S)$ with $q >p$, clearly, the set $\Phi(\Gamma) \subset \Pc_q(\Pc_q(\S) \x \S \x A)$ is convex and compact for all $\Gamma \in \Pc_p( \Rc)$. Then, by the continuity properties of  $\Phi$ and $\Jc$ established in the previous step, we may apply the Berge Maximum Theorem \citeauthor*{aliprantis2006infinite} \cite[Theorem 17.31]{aliprantis2006infinite} to conclude that $\widehat\Phi(\Gamma)$ has nonempty compact values. We are now in the context of the Kakutani--Fan--Glicksberg fixed point Theorem \cite[Corollary 17.55]{aliprantis2006infinite}, and we may conclude that the set of fixed points of $\hat\Phi$ is compact and nonempty, hence the existence of a probability measure $\widehat\Gamma$ satisfying \eqref{MFGrelax2}. 

\medskip
\noindent {\bf 2.} We finally induce from the fixed point $\widehat\Gamma$ a solution of the MFG associated with a Borel map $\widehat{a}\in\Ac$.
As $\widehat\Gamma \in \Phi(\widehat\Gamma)$, we have $\widehat\Gamma=\Lc^\pi_{(X_0,\varphi_1(X_0,\hat\Gamma,M),\bar{M}(M,\xi))}$ for some $\pi\in\bar\Ac$ and
    \begin{align*}
        \mu^{\hat\Gamma}(\mathrm{d}x_0,\mathrm{d}x_1,\mathrm{d}a)=\E^\pi\left[\varphi(\mu^{\hat\Gamma},X_0,M)(\mathrm{d}x_1)\delta_{X_0}(\mathrm{d}x_0) \delta_{\bar{M}(M,\xi)}(\mathrm{d}a) \right].
    \end{align*}
By similar techniques as in \citeauthor*{Filippov1962} \cite{Filippov1962} (see also \citeauthor*{Roxin1962} \cite{Roxin1962}) that Condition \eqref{cond:conv} guarantees the existence of a Borel measurable map $\widehat{a}: \Pc(\X \x A) \x \S \x [0,1] \to A$ s.t.
\begin{equation}\label{hatarandom}
\begin{array}{rcl}
    &\E^\pi \left[ \varphi(\mu^{\hat\Gamma},X_0,M) \mid X_0 \right]
    =\E^\pi \left[ \Lc_{F_{X_0}^\eta(\mu^{\hat\Gamma},\bar{M}(M,\xi) )}^{|(X_0,M)}  
                        \big| X_0 \right]
    =  \Lc_{F_{X_0}^\eta(\mu^{\hat\Gamma},\hat{a}(\mu^{\hat\Gamma}, X_0,\xi ))}^{|X_0},
&\\
   & \E^\pi \big[ f^\ell \big(X_0, 
                                       F^\eta_{X_0}\!(\mu^{{\hat\Gamma}},m),m 
                                \big)_{m=\bar{M}(M,\xi)} 
                \big| X_0 \big] 
    = \E^\pi \big[ f^\ell \big(X_0, F^\eta_{X_0}\!(\mu^{\hat\Gamma},a),a 
                                    \big)_{a=\hat{a}( \mu^{\hat\Gamma}, X_0,\xi )}
                       \big| X_0 \big],
&
\end{array}
\end{equation}
for all $1 \le \ell \le L $, and 
    \begin{equation}\label{Urandom}
        \E^{\pi} \left[U \left( \Lc_{F^\eta_{X_0}(\mu^{\hat\Gamma},\bar{M}(M,\xi))}^{|(X_0,M)} \right) \right] 
        \le 
        \E^{\pi} \left[U \left( \Lc_{F^\eta_{X_0}(\mu^{\hat\Gamma},\hat{a}( \mu^{\hat\Gamma}, X_0,\xi ) )}^{|X_0} \right) \right].
    \end{equation}
We now set 
\begin{align*}
    \muh:= \Lc_{(X_0, F^\eta_{X_0}\!(\mu^{\hat\Gamma},\hat a), \hat a)}
    ~~\mbox{with}~~
    \widehat{a}:=\widehat{a}( \mu^{\hat\Gamma}, X_0,\xi ).
\end{align*}
Notice that, for each $\alpha \in \Ac$, we have $F^\eta_{X_0}(\mu^{\widehat\Gamma},\alpha(X_0,\xi))= F^\eta_{X_0}(\muh,\alpha(X_0,\xi))$ a.e., $\muh(\mathrm{d}x,A)=\mu^{\widehat\Gamma}(\mathrm{d}x,A)$ by \eqref{hatarandom}, and $\overline{V}(\mu^{\widehat\Gamma}) \le J(\muh, \widehat{a})$ by \eqref{Urandom}. Direct verification now shows that $\muh$ is a MFG solution associated to $\widehat{a}(\mu^{\widehat\Gamma},X_0,\xi)$.
\ep

\section{Existence of MFG solution in continuous time}
\label{sect:MFGContinuous}

\begin{proof}
Similar to the one-period setting, we handle the potential discontinuities in the conditional law by means of an appropriate relaxed formulation of the control problem, replacing the control process by a measure-valued process thus allowing for compactness and continuity under weak convergence. 

\medskip
\noindent {\bf 1.} We first consider the space of relaxed controls:
\[
    \M(A)
    :=
    \Bigl\{
        m \in \Mc_+(A \times [0,T]) \;:\;
        m(A,\mathrm{d}t)=\mathrm{d}t
        \text{ and }
        m(\mathrm{d}a,[0,T]) \in \Pc(A)
    \Bigr\},
\]
and we introduce the set of relaxed controls $\overline{\Ac}$ as the collection of all $\F$--predictable $\Pc(A)$--valued processes $(\gamma_t)_{t\in[0,T]}$ with $\gamma_t(\mathrm{d}a)\mathrm{d}t \in \M(A)$ almost surely.  
In particular, $\Ac \subset \overline{\Ac}$ by identifying any strict control $\beta \in \Ac$ with the measure valued process $\delta_{\beta(t,X_0,X_t)}(\mathrm{d}a)\mathrm{d}t$.

Given a flow of measures $\mu = (\mu_t)_{t\in[0,T]} \subset \Pc_p(\R^{2d})$ and a relaxed control $\gamma \in \overline{\Ac}$, we define the state process $\overline X^{\mu,\gamma}$ as the unique strong solution of
\[
    \overline X^{\mu,\gamma}_0 = X_0,
    ~~
    \mathrm{d}\overline X^{\mu,\gamma}_t
    =
    \int_A b(t,X_0,\overline X^{\mu,\gamma}_t,\mu_t,a)\,\gamma_t(\mathrm{d}a)\,\mathrm{d}t
    + \mathrm{d}W_t,
    ~~t\in[0,T].
\]
The process $(\overline X,W,\mu)$ takes values in $\Cc\times\Cc\times\M(A)$ with canonical coordinate $(\Xh,\Wh,\widehat{\gamma})$. Our MFG is based on the joint law of the pair $(X_0,\Lc^{|X_0}_{(\overline X,W,\mu)})$ which takes values in  $\Rc := \R^d\times \Pc_p(\Cc \times \Cc \times \M(A))$, equipped with its product topology, and we denote the corresponding canonical process on $\Rc$ by $(X_0,\muh)$. 

We finally introduce the admissible set of probability measures on $\Rc$
\[
    \overline{\Rc}
    :=
    \bigl\{
        \Pr \in \Pc_p(\Rc)
        :\;
        \Pr\circ X_0^{-1}=\eta,
        ~\text{and}~\Wh~\text{is a $\muh(\omega)-$Brownian motion for} 
        ~\Pr-\mbox{a.e.}~\omega
    \bigr\}.
\]
A probability measure in $\Pr\in\Rc$ is said to be compatible with the measure flow $\mu$ if
$$
\Qr-\text{a.e. }\omega \in \Rc,
        \muh(\omega)
        =
        \P\circ
        \bigl(
            \overline X^{\mu,\theta(\omega),X_0(\omega)},
            W,
            \theta(\omega)
        \bigr)^{-1}
        \text{for some }
        \theta(\omega)\in\overline{\Ac}.
$$
For each $\Pr\in\Pc_p(\Rc)$ and $t\in[0,T]$, notice that the corresponding joint law of $(X_t,X_0)$ under $\Pr$ is given by:
\[
    \mu_t^{\Pr}(\mathrm{d}x,\mathrm{d}x_0)
    :=
    \E^{\Pr}\!\left[
        \muh\circ(\Xh_t)^{-1}(\mathrm{d}x)\,
        \delta_{X_0}(\mathrm{d}x_0)
    \right].
\]
We then introduce our relaxed formulation
\begin{equation}\label{relax:cont}
    \Vc(\Pr)
    := \sup_{\Qr\in\Phi(\Pr)} \Jc(\Qr,\Pr),
    ~~
    \Jc(\Qr,\Pr)
    := \E^{\Qr}\!\left[
        U\bigl(\muh\circ\Xh_T^{-1}\bigr)
    \right].
\end{equation}
where the set-valued map $\Phi : \overline{\Rc}
    \longrightarrow
    \!\!\!\!\!\to
    \overline{\Rc}$ is defined by:
\[
    \Phi(\Pr)
    :=
    \{
        \Qr \in \overline{\Rc}:~\Qr-\mbox{compatible with}~\mu^{\Pr}
    \},
    ~\mbox{for all}~
    \Pr\in\overline{\Rc}.
\]
It is straightforward to check that every strict control $\beta \in \Ac$ induces 
\[
    \Lc\Bigl(X_0,
                 \Lc^{|X_0}_{X^{\mu^{\Pr},\beta},W,\delta_{\beta(t,X_0,X_t)}(\mathrm{d}a)\mathrm{d}t}
         \Bigr)
    \in \Phi(\Pr)
    ~\mbox{for all}~
    \Pr\in\overline{\Rc}.
\]
Furthermore, if $(\Qr^n)_{n\ge1} \subset \Phi(\Pr)$ converges to some $\Qr$ in the $\Wc_p$ topology, then $\Qr \in \Phi(\Pr)$ as well.  
This follows from standard weak convergence arguments for controlled martingale problems; see, for instance, \cite[Proposition~4.7]{djete2019general}.  
In particular, the graph of $\Phi$ is closed under $\Wc_p$, which will be crucial for the application of Kakutani’s fixed point theorem in the next step.

\medskip
\noindent{\bf 2.} We now prove that the correspondence 
\(\Phi : \overline{\Rc} \longrightarrow \!\!\!\!\!\to \overline{\Rc}\)
is both lower hemicontinuous and upper hemicontinuous.

Let $(\Pr^n)_{n\ge1} \subset \overline{\Rc}$ be a sequence such that
\(\Pr^n \to \Pr\in \overline{\Rc}\) in $\Wc_p$.
For $x_0 \in \R^d$ and $\gamma=(\gamma_t)_{t\in[0,T]} \in \overline{\Ac}$, it follows from the Lipschitz property of $b$ in $(x,m)$ and a Gronwall argument, there exists a constant $C>0$, independent of $n$, such that
\begin{equation}\label{Phi:cont1}
    \E\!\left[
        \sup_{t\in[0,T]}
        \big|X^{\mu^{\Pr^n},\gamma,x_0}_t
            - X^{\mu^{\Pr},\gamma,x_0}_t
        \big|^p
    \right]
    \le
    C
    \sup_{t\in[0,T]}
    \Wc_p\big(\mu_t^{\Pr^n}, \mu_t^{\Pr}\big).
\end{equation}
Moreover, by standard arguments and continuous dependence of SDEs on the measure parameter, the map $\Pr \in \Pc_p(\Rc) \longmapsto (\mu_t^{\Pr})_{t\in[0,T]}\in C([0,T]; \Pc_p(\R^{2d}))$ is continuous under the $\Wc_p$ topology. Then $\lim_{n\to\infty}
    \sup_{t\in[0,T]}
    \Wc_p(\mu_t^{\Pr^n}, \mu_t^{\Pr}) = 0$, and we deduce from \eqref{Phi:cont1} that
\[
    \lim_{n\to\infty}
    \P \circ (X^{\mu^{\Pr^n},\gamma,x_0},\gamma)^{-1}
    =
    \P \circ (X^{\mu^{\Pr},\gamma,x_0},\gamma)^{-1}
    \quad\text{in }\Wc_p.
\]
Since the control space $A$ is compact, it follows that the set of relaxed controls $\M(A)$ is compact under weak convergence.  
Moreover, the boundedness of $b$ and the integrability condition $\Lc(X_0)\in\Wc_q$ for some $q> p$ imply the tightness of the family $\bigcup_{n\ge1}\Phi(\Pr^n)$ in $\Wc_{p}$.
Consequently, any sequence $(\Qr^n)_{n\ge1}$ with $\Qr^n\in\Phi(\Pr^n)$ admits a convergent subsequence in $\Wc_{p}$, and any limit $\Qr$ of a converging subsequence is in $\Phi(\Pr)$.

Conversely, given $\Qr\in\Phi(\Pr)$, one can construct a sequence $(\Qr^n)_{n\ge1}$ with $\Qr^n\in\Phi(\Pr^n)$ and $\Qr^n\to\Qr$ in $\Wc_{p}$.
Hence, $\Phi$ is both upper and lower hemicontinuous.

\medskip
\noindent {\bf 3.} We next establish existence of a fixed point for the maximizers correspondence:
\[
    \widehat{\Phi}(\Pr)
    :=
    \bigl\{
        \Qr \in \Phi(\Pr)
        : \Jc(\Qr,\Pr) = \Vc(\Pr)
    \bigr\}.
\]
By construction, $\widehat{\Phi}(\Pr)$ is convex, and by the Berge Maximum Theorem
(see \cite[Theorem~17.31]{aliprantis2006infinite}),
$\widehat{\Phi}$ has nonempty, compact values and is upper hemicontinuous. Then, it follows from the Kakutani–Fan–Glicksberg fixed point theorem
(\cite[Corollary~17.55]{aliprantis2006infinite}) to the set valued map
$\widehat{\Phi}$ defined on the convex compact subset $\overline{\Rc}$ of $\Pc_p(\Rc)$ that there exists $\widehat\Pr \in \overline{\Rc}$ such that $\widehat\Pr \in \widehat{\Phi}(\widehat\Pr)$.

\medskip
\noindent {\bf 4.} We finally show that the fixed point $\widehat\Pr$ induces a strict MFG solution. By standard results on relaxed controls (see \cite{el1987compactification}), every relaxed control in $\overline{\Ac}$ can be approximated by a sequence of strict controls in $\Ac$:
\[
    \Vc(\Pr)
    =
    \sup_{\beta \in \Ac}
    J\bigl((\mu_t^{\Pr})_{t\in[0,T]},\,\beta\bigr)
    ~\mbox{for all}~
    \Pr \in \overline{\Rc}.
\]
By definition of $\widehat{\Phi}$, we have for $\widehat\Pr$--a.e.\ $\omega$, the conditional law $\muh(\omega)$ is the distribution of the canonical triple $(\Xh,\Wh,\widehat{\gamma})$ solving
\[
    \mathrm{d}\Xh_t
    =
    \int_A
        b\bigl(t,X_0(\omega),\Xh_t,\mu_t^{\Pr},a\bigr)
        \widehat{\gamma}_t(\mathrm{d}a)\,\mathrm{d}t
    + \mathrm{d}\Wh_t,
    \quad
    \Xh_0 = X_0(\omega),
\]
where $\Wh$ is a $\muh(\omega)$--Brownian motion and $\widehat{\gamma}$ is an $\Pc(A)$--valued predictable process representing the relaxed control. To construct a strict control that induces the same marginal dynamics, define the conditional mean kernels
\[
    K(t,\muh(\omega),x)
    :=
    \E^{\muh(\omega)}\!\bigl[
        \widehat{\gamma}_t
        \,\big|\,
        \Xh_t = x
    \bigr]
    ~\mbox{and}~
    \overline{K}(t,x_0,x)
    :=
    \E^{\Pr}\!\bigl[
        K(t,\muh,x)
        \,\big|\,
        X_0 = x_0
    \bigr].
\]
By construction, $\overline{K}$ is a measurable map from
$[0,T]\times\R^d\times\R^d$ into $\Pc(A)$.
Using standard projection (or mimicking) arguments 
(see, e.g., \cite{gyongy1986mimicking}), we can construct a process 
$Y$ satisfying
\[
    Y_0 = X_0,
    \qquad
    \mathrm{d}Y_t
    =
    \int_A
        b\bigl(t,X_0,Y_t,\mu_t^{\Pr},a\bigr)
        \overline{K}(t,X_0,Y_t)(\mathrm{d}a)\,\mathrm{d}t
    + \mathrm{d}W_t,
\]
such that the joint conditional law satisfies
\[
    \Lc^{\Pr}_{X_0,\E^{\Pr}[\muh \circ \Xh_t^{-1}|X_0]}
     =
    \Lc_{X_0,\Lc^{|X_0}_{Y_t}}.
\]
In words, the process $Y$ reproduces the same conditional marginal distributions as the relaxed process $\Xh$. We next exploit the convexity of the control problem to replace the measure-valued control $\overline{K}$ by a point-valued one.  
Since $A$ is convex and the set $\{b(\cdot,a): a\in A\}$ is convex in $\R^d$, the measurable selection theorem (see, e.g., \cite{el1987compactification}) guarantees the existence of a Borel measurable map ${\rm a} 
    : [0,T]\times\R^{2d}\times\Pc(\R^{2d})
    \longrightarrow A,$
such that
\begin{align*}
    \int_A
        b(t,X_0,Y_t,\mu_t^{\Pr},a)
        \overline{K}(t,X_0,Y_t)(\mathrm{d}a)
        =b (t, X_0,Y_t, \mu^{\Pr}_t, {\rm a}(t,X_0,Y_t,\mu^{\Pr}_t )),
\end{align*}
and consequently
\[
    \Lc^{|X_0= x_0}_{Y_t}
    =
    \Lc^{|X_0= x_0}_{X_t^{\mu^{\Pr},\alpha}},
    ~\mbox{with}~
    \alpha(t,x_0,x) := {\rm a} (t,x_0,x,\mu_t^{\Pr})
    ~\mbox{for all}~x_0.
\]
That is, the strict control $\alpha$ reproduces the same conditional law as the relaxed one. In addition, for each $t \in [0,T]$,
\begin{align*}
    \mu^{\Pr}_t = \E^{\Pr}\!\left[
        \muh\circ(\Xh_t)^{-1}(\mathrm{d}x)\,
        \delta_{X_0}(\mathrm{d}x_0)
    \right] = \E^{\Pr}\!\left[
        \Lc(Y_t\mid X_0)(\mathrm{d}x)\,
        \delta_{X_0}(\mathrm{d}x_0)
    \right]=\Lc(X_t^{\mu^{\Pr},\alpha}, X_0).
\end{align*}
Finally, by conditional Jensen’s inequality and the concavity of $U$,
\begin{align*}
    \E^{\Pr}\!\left[
        U\bigl(\muh\circ \Xh_T^{-1}\bigr)
    \right]
    = 
    \E^{\Pr}\!\left[
        \E^{\Pr}\!\left[
            U\bigl(\muh\circ \Xh_T^{-1}\bigr)
            \,\big|\,
            X_0
        \right]
    \right]
    &\le
    \E^{\Pr}\!\left[
        U\Bigl(
            \E^{\Pr}\!\bigl[
                \muh\circ \Xh_T^{-1}
                \,\big|\,
                X_0
            \bigr]
        \Bigr)
    \right]
    \\
    &=
    \E\!\left[
        U\bigl(\Lc(X_T^{\mu^{\Pr},\alpha}\mid X_0)\bigr)
    \right].
\end{align*}
Hence, the strict control $\alpha$ attains the same performance as the relaxed equilibrium. The optimality property of $\alpha$ follows immediately from that of the relaxed equilibrium $\Pr$.
Hence, the flow of measures $(\Lc_{X_0,X_t^{\mu^{\Pr},\alpha}})_{t\in[0,T]}$
satisfies both the fixed-point and individual optimality conditions, and it is therefore an {\rm MFG} solution associated with the strict control $\alpha$.
\end{proof}

\bibliographystyle{plain}


\bibliography{DjeteTouzi-ArxivVersion}

\end{document}